\declaretheorem[name=Theorem, refname={Theorem, Theorems}, Refname={Theorem, Theorems}, parent=section]{theorem}
\declaretheorem[name=Assumption, refname={Assumption, Assumptions}, Refname={Assumption, Assumptions}, sibling=theorem, style=definition]{assumption}
\declaretheorem[name=Corollary, refname={Corollary, Corollaries}, Refname={Corollary, Corollaries}, sibling=theorem]{corollary}
\declaretheorem[name=Definition, refname={Definition, Definitions}, Refname={Definition, Definitions}, sibling=theorem, style=definition]{definition}
\declaretheorem[name=Example, refname={Example, Examples}, Refname={Example, Examples}, sibling=theorem, style=definition]{example}
\declaretheorem[name=Lemma, refname={Lemma, Lemmas}, Refname={Lemma, Lemmas}, sibling=theorem]{lemma}
\newtheorem{remark}{Remark}
\crefname{section}{section}{sections}
\Crefname{section}{Section}{Sections}
\crefname{table}{table}{tables}
\Crefname{table}{Table}{Tables}
\setlist[enumerate]{leftmargin=*}
\let\vec\mathbf
\def\I{{\mathbb I}}
\def\x{{\vec{x}}}
\def\X{{\vec{X}}}
\def\w{{\vec{w}}}
\def\y{{\vec{y}}}
\def\Y{{\vec{Y}}}
\def\E{{\mathbb E}}
\def\I{{\mathbb{I}}}
\def\P{{\mathbb P}}
\def\Re{{\mathbb R}}
\def\hF{{\widehat{F}}}
\def\nX{{\X_{n+1}}}
\def\nY{{Y_{n+1}}}
\def\sX{{\mathcal{X}}}
\def\sY{{\mathcal{Y}}}
\def\cdsplit{{\texttt{CD-split}}}
\def\distsplit{{\texttt{Dist-split}}}
\def\uceil{{U_{[\lceil 1-0.5 \alpha \rceil]}}}
\def\ufloor{{U_{[\lfloor0.5\alpha\rfloor]}}}
\def\hF{{\widehat{F}}}
\newcommand{\cmark}{\ding{51}}%
\newcommand{\xmark}{\ding{55}}%
\renewcommand{\algorithmicrequire}{\textbf{\small Input:}}
\renewcommand{\algorithmicensure}{\textbf{\small Output:}}
\definecolor{darkgreen}{rgb}{0.3, 0.5, 0.0}
\newcommand{\codecomment}[1]{\textbf{\color{darkgreen}\ // #1}}
\title{Flexible distribution-free
	conditional predictive bands \\ 
	using density estimators}
\author{%
	Rafael Izbicki \\
	University of S\~{a}o Carlos \\
	\texttt{rafaelizbicki@gmail.com} \\
	\\
	Gilson Y. Shimizu \\
	University of S\~{a}o Carlos \\
	S\~{a}o Paulo\\
	University of \texttt{gilsonshimizu@yahoo.com.br} \\
	\\
	Rafael B. Stern \\
	University of S\~{a}o Carlos \\
	\texttt{rbstern@gmail.com} \\  
	\\
	\\
}
\def\hF{{\widehat{F}}}
\def\hf{{\widehat{f}}}
\def\uceil{{U_{\lceil 1-0.5 \alpha \rceil}}}
\def\ufloor{{U_{\lfloor0.5\alpha\rfloor}}}
\def\hf{{\widehat{f}}}
\def\hF{{\widehat{F}}}
\DeclarePairedDelimiter{\ceil}{\lceil}{\rceil}
\begin{document}

\maketitle

\begin{abstract}
 Conformal methods create prediction bands that 
 control average coverage under no assumptions 
 besides i.i.d. data.
 Besides average coverage, one might
 also desire to control conditional coverage, 
 that is, coverage for every new testing point.
 However, without strong assumptions,
 conditional coverage is unachievable. 
 Given this limitation, the literature has
 focused on methods with asymptotical
 conditional coverage.
 In order to obtain this property, 
 these methods require strong conditions
 on the dependence between
 the target variable and the features.
 We introduce two conformal methods 
 based on conditional density estimators
 that do not depend on this type of assumption
 to obtain asymptotic conditional coverage: 
 Dist-split and CD-split.  
 While Dist-split asymptotically 
 obtains optimal intervals, 
 which are easier to interpret than general regions, 
 CD-split obtains optimal size regions,
 which are smaller than intervals.
 CD-split also obtains local coverage by 
 creating a data-driven partition of
 the feature space 
 that scales to high-dimensional settings  
 and by generating
 prediction bands locally 
 on the partition elements. 
 In a wide variety of 
 simulated scenarios,
 our methods have a better
 control of conditional coverage and
 have smaller length than
 previously proposed methods.
\end{abstract}

\section{Introduction}

Supervised machine learning methods predict
a response variable, $Y \in \sY$,
based on features, $\X \in \sX$, using
an i.i.d. sample, $(\X_1,Y_1),\ldots,(\X_n,Y_n)$. 
While most methods yield point estimates,
it is often more informative to
present prediction bands, that is,
a subset of $\sY$ with 
plausible values for $Y$ \citep{Neter1996}. 

A particular way of constructing 
prediction bands is through
\emph{conformal predictions}
\citep{Vovk2005,Vovk2009}.
This methodology is appealing because it controls 
the {\em marginal coverage} of
the prediction bands assuming
solely i.i.d. data.
Specifically, given a new instance,
$(\X_{n+1},Y_{n+1})$, a conformal prediction,
$C(\X_{n+1})$, satisfies
\begin{align*}
 \P\left(Y_{n+1} \in C(\X_{n+1}) \right) 
 &\geq 1-\alpha,
\end{align*}
where $0<1-\alpha<1$ is a desired coverage level.
Besides marginal validity one might also 
wish for stronger guarantees.
For instance, \emph{conditional validity} holds
when, for every $\x_{n+1} \in \sX$,
\begin{align*}
 \P(Y_{n+1}\in C(\textbf{X}_{n+1})|\textbf{X}_{n+1}=\textbf{x}_{n+1})
 &\geq 1-\alpha.
\end{align*}
That is, conditional validity guarantees
adequate coverage for each new instance and
not solely on average across instances.

Unfortunately, conditional validity can be
obtained only under strong assumptions about the the distribution of $(\X,Y)$
\citep{Vovk2012,Lei2014,Barber2019}.
Given this result, effort has been focused
on obtaining  intermediate conditions. For instance, many conformal methods
control \emph{local coverage}:
\begin{align*}
 \P(Y_{n+1}\in C(\textbf{X}_{n+1})|\textbf{X}_{n+1}\in A)
 &\geq 1-\alpha,
\end{align*}
where $A$ is a subset of $\mathcal{X}$
\citep{Lei2014,Barber2019,Guan2019}.
These methods are based on computing
conformal bands using only 
instances that fall in $A$.
However, to date, these methods do not 
scale to high-dimensional settings because 
it is challenging to create $A$ that 
is large enough so that many instances fall in $A$,
and yet small enough so that
\begin{align*}
 \P(Y_{n+1}\in C(\textbf{X}_{n+1})|\textbf{X}_{n+1}\in A) \approx 
 \P(Y_{n+1}\in C(\textbf{X}_{n+1})|\textbf{X}_{n+1}=\x_{n+1}),
\end{align*}
that is, local validity is close to conditional validity.
 
Another alternative to conditional validity is 
\emph{asymptotic} conditional coverage \citep{Lei2018}.
Under this property, conditional coverage
converges to the specified level as
the sample size increases. That is,
there exist random sets, $\Lambda_n$, such that 
$\P(X_{n+1} \in \Lambda_n|\Lambda_n) = 1-o_P(1)$ and
\begin{align*}
 \sup_{x \in \Lambda_n}\big|\P(Y_{n+1} \in C(\textbf{X}_{n+1})
 |\textbf{X}_{n+1}=\x_{n+1}) - 1-\alpha \big| = o_P(1).
\end{align*}
In a regression context in which 
$\sY = \mathbb{R}$, \citet{Lei2018} obtains 
asymptotic conditional coverage under
assumptions such as  $Y = \mu(\X) + \epsilon$,
where  $\epsilon$ is independent of $\X$ and
has density symmetric around 0.
Furthermore, the proposed prediction band
converges to the interval with
the smallest interval among
the ones with adequate conditional coverage.

Despite the success of these methods,
there exists space for improvement.
In many problems the assumption that
$\epsilon$ is independent of $\X$ and
has a density symmetric around $0$ is
unrealistic. For instance,
in heteroscedastic settings \citep{Neter1996},
$\epsilon$ depends on $\X$.
It is also common for $\epsilon$  
to have an asymmetric or even multimodal distribution.
Furthermore, in these general settings,
the smallest region with adequate
conditional coverage might not be
an interval, which is
the outcome of most current methods.

\subsection{Contribution}

We propose new methods and show that
they obtain asymptotic conditional coverage
without assuming a particular type 
of dependence between the target and the features.
Specifically, we propose two methods:
\distsplit\ and \cdsplit .
While \distsplit\ produces prediction bands that
are intervals and easier to interpret, 
\cdsplit\ yields arbitrary regions,
which are generally smaller and appealing 
for multimodal data. 
While \distsplit\ converges to
an oracle interval, \cdsplit\
converges to an oracle region.
Furthermore, since \cdsplit\ is based on 
a novel data-driven way of 
partitioning the feature space,
it also controls local coverage even in 
high-dimensional settings. 
\Cref{tab:comparison_approaches} summarizes 
the properties of these methods.

\begin{table*}
 \centering
 \caption{Properties of \distsplit\ and \cdsplit.}
 \begin{tabular}{ c|c|c|c|c|c } 
  Method & \makecell{Marginal  \\ coverage}
  & \makecell{Asymptotic  \\conditional coverage}
  & \makecell{Local \\coverage}
  & \makecell{Prediction bands \\are intervals} 
  & \makecell{Can be used \\for classification?} \\
  \hline 
  \distsplit &\cmark &\cmark & \xmark & \cmark &\xmark\\
  \hline
  \cdsplit & \cmark & \cmark & \cmark &\xmark & \cmark
  \label{tab:comparison_approaches}
 \end{tabular}
\end{table*}

The proposed methods also have
desirable computational properties.
They are based on fast-to-compute 
split (inductive)-conformal bands 
\citep{Papadopoulos2008,Vovk2012,Lei2018} and
on novel conditional density estimation methods that
scale to high-dimensional datasets
\citep{Lueckmann2017,Papamakarios2017,Izbicki2017}
Both methods are easy to compute and
scale to large sample sizes as long as
the conditional density estimator also does.

In a wide variety of simulation studies,
we show that our proposed methods obtain
better conditional coverage and
smaller band length than
alternatives in the literature.
For example, \Cref{fig:example_coverage} 
illustrates  \cdsplit, \distsplit \ and
the reg-split method from \citet{Lei2018} 
on the toy example from \citet{Lei2014}. 
The bottom right plot shows that 
both \cdsplit \ and  \distsplit\ get close to
controlling conditional coverage. 
Since \distsplit\ can yield only intervals,
\cdsplit \ obtains smaller bands in
the region in which $\Y$ is bimodal.
In this region \cdsplit\ yields 
a collection of intervals around
each of the modes.

\begin{figure}
 \centering
 \subfloat{\includegraphics[width=0.25\textwidth]{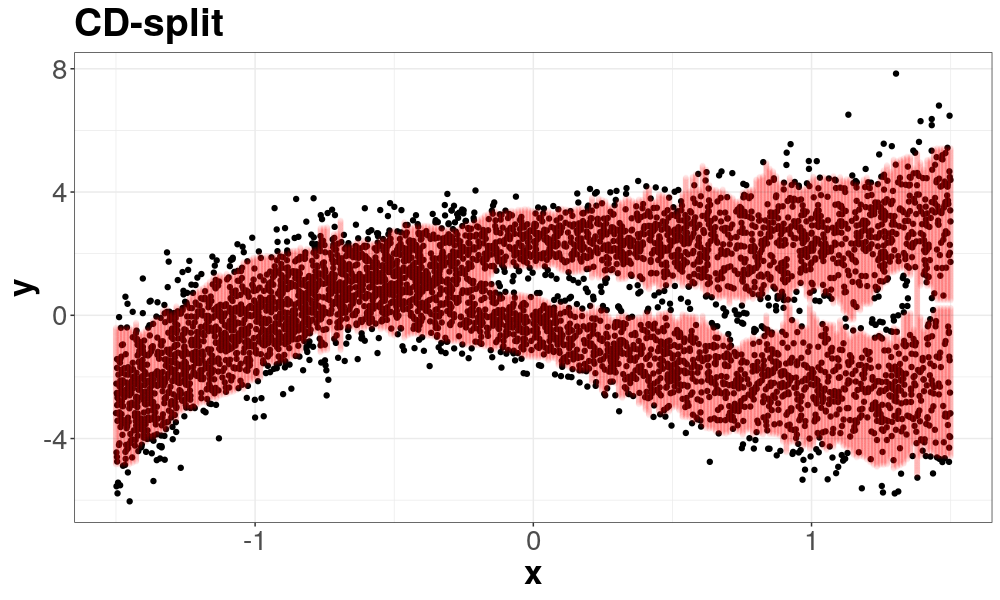}}
 \subfloat{\includegraphics[width=0.25\textwidth]{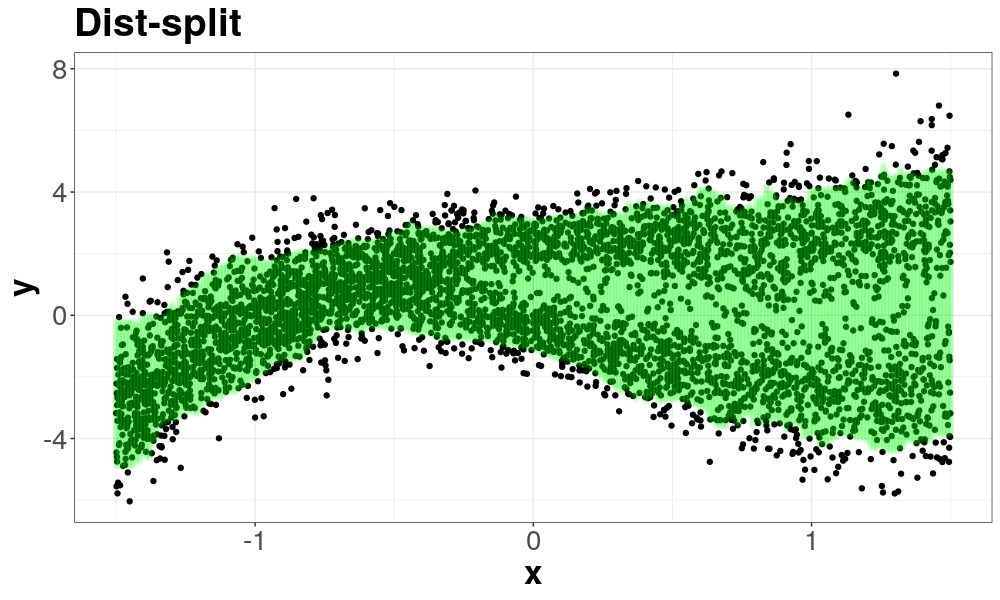}}
 \\
 \subfloat{\includegraphics[width=0.25\textwidth]{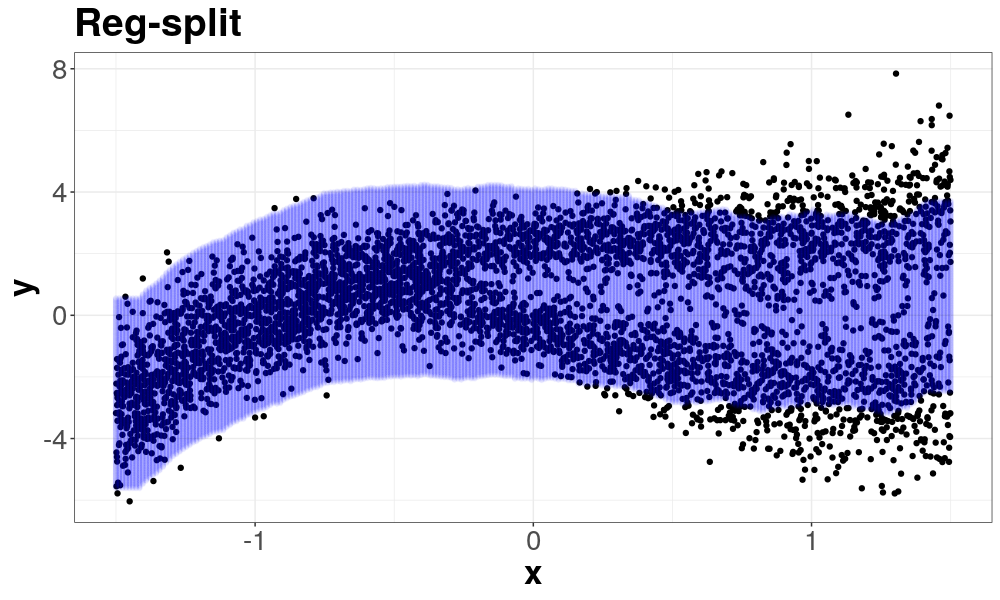}}
 \subfloat{\includegraphics[width=0.25\textwidth]{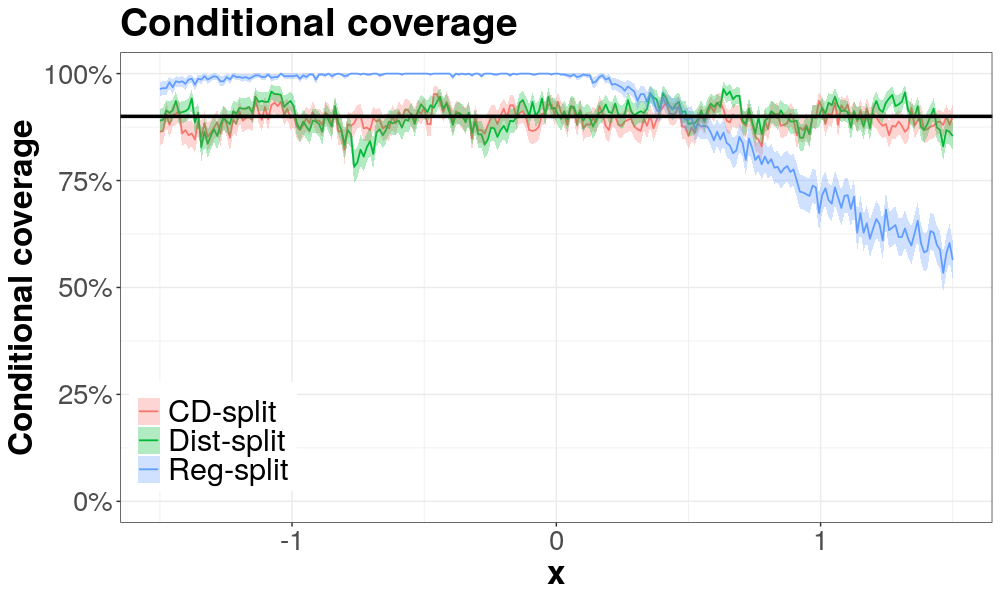}}
 \vspace{-2mm}
 \caption{Comparison between \cdsplit, \distsplit \ and
 the reg-split method from \citet{Lei2018}.} 
 \label{fig:example_coverage}
\end{figure}

The remaining of the paper is organized as follows.
Section \ref{sec:dist} presents \distsplit.
Section \ref{sec:cd} presents \cdsplit.
Experiments are shown in Section \ref{sec:exp}. Final remarks are on Section \ref{sec:final}.
All proofs are shown in the supplementary material.

\textbf{Notation.}
Unless stated otherwise, we study 
a univariate regression setting 
such that $\sY = \Re$.
Data from an i.i.d. sequence is
split into two parts,
$\mathbb{D}=\{(\X_1,Y_1),\ldots,(\X_n,Y_n)\}$
and $\mathbb{D}'=\{(\X'_1,Y'_1),\ldots,(\X'_n,Y'_n)\}$.
The assumption that both datasets have the same size
is used solely to simplify notation.
Also, the new instance, $(\X_{n+1},Y_{n+1})$, 
has the same distribution as the other sample units.
Finally, $q(\alpha;\{t_1,\ldots,t_m\})$ is
the $\alpha$ quantile of
$\{t_1,\ldots,t_m\}$.

\section{\distsplit}
\label{sec:dist}

The \distsplit \ method is based on
the fact that, if $F(y|\x)$ is
the conditional distribution of
$\nY$ given $\nX$, then
$F(\nY|\nX)$ has uniform distribution.
Therefore, if $\hF$ is close to $F$, then
$\hF(\nY|\nX)$ approximately uniform, 
and does not depend on $\nX$. That is,
obtaining marginal coverage for $\hF(\nY|\nX)$
is close to obtaining conditional coverage.

\begin{definition}[\distsplit\ prediction band]
 \label{def:dist_split}
 Let $\hF(y|\x_{n+1})$ be an estimate
 based on  $\mathbb{D}'$ of
 the conditional distribution of 
 $Y_{n+1}$ given $\x_{n+1}$.
 The \distsplit\ prediction band,
 $C\left(\x_{n+1}\right)$, is
 \begin{align*}
  \label{eq:cde_intervals}
  C&\left(\x_{n+1}\right)
  :=\left\{y: q(.5\alpha;\mathcal{T}(\mathbb{D}))
  \leq \hF(y|\x_{n+1}) \leq
  q(1-.5\alpha;\mathcal{T}(\mathbb{D}))\right\} \\
  &= \left[\hF^{-1}\left(q(.5\alpha;\mathcal{T}(\mathbb{D}))|\x_{n+1}\right);
  \hF^{-1}\left(q(1-.5\alpha;\mathcal{T}(\mathbb{D}))|\x_{n+1}\right) \right]
 \end{align*}
 where $\mathcal{T}(\mathbb{D})
 =\left\{\hF(Y_i|\X_i), i = 1,\ldots,n\right\}$. 
\end{definition}

Algorithm \ref{alg:dist} shows an
implementation of \distsplit.
\begin{algorithm}
 \caption{ \small \distsplit}\label{alg:dist}
 \algorithmicrequire \ {\small Data $(\textbf{X}_{i},Y_{i})$, $i=1,...,n$, 
 miscoverage level $\alpha \in (0,1)$, 
 algorithm $\mathcal{B}$ for fitting 
 conditional cumulative distribution function} \\
 \algorithmicensure \ {\small Prediction band for
 $\textbf{x}_{n+1}\in\mathbb{R}^d$}
 \begin{algorithmic}[1]
  \STATE Randomly split $\{1,2,...,n\}$ into two subsets $\mathbb{D}$ e $\mathbb{D}'$
  \STATE Fit $\hat{F} = \mathcal{B}(\{(\textbf{X}_i,Y_i):i \in \mathbb{D}' \})$ 
  \codecomment{Estimate cumulative distribution function}
  \STATE  Let $\mathcal{T}(\mathbb{D})=
  \{\hat{F}(y_i|\textbf{x}_i), i \in \mathbb{D}\}$
  \STATE Let $t_1=q(\alpha/2;\mathcal{T}(\mathbb{D}))$ and
  $t_2=q(1-\alpha/2;\mathcal{T}(\mathbb{D}))$
  \codecomment{Compute the quantiles of the set $\mathcal{T}(\mathbb{D})$}
  \STATE \textbf{return} $\left\{y:t_2  \geq \widehat{F}(y|\textbf{x}_{n+1})\geq t_1 \right\}$
 \end{algorithmic}
\end{algorithm}

\distsplit\ adequately controls
the marginal coverage.
Furthermore, it exceeds the
specified $1-\alpha$ coverage by
at most $(n+1)^{-1}$.
These results are presented in
\cref{thm:dist_split_control}.

\begin{theorem}[Marginal coverage]
 \label{thm:dist_split_control}
 Let $C(\textbf{X}_{n+1})$ be
 such as in \cref{def:dist_split}.
 If both $F(y|\x)$ and $\hF(y|\x)$ are
 continuous for every $\x \in \mathcal{X}$, 
 then
 \begin{align*}
  1-\alpha \leq \mathbb{P}(Y_{n+1} \in C(\textbf{X}_{n+1}))
  \leq 1-\alpha+\frac{1}{n+1}.
 \end{align*}
\end{theorem}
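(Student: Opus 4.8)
The plan is to recognise Dist-split as a two-sided split-conformal band whose nonconformity score is $\hF(Y\mid\X)$, and then to run the standard exchangeability argument after conditioning on the auxiliary split $\mathbb{D}'$. Put $U_i:=\hF(Y_i\mid\X_i)$ for $i=1,\dots,n$ and $U_{n+1}:=\hF(Y_{n+1}\mid\X_{n+1})$, so that $\mathcal{T}(\mathbb{D})=\{U_1,\dots,U_n\}$. Directly from \cref{def:dist_split}, $Y_{n+1}\in C(\X_{n+1})$ if and only if $q(.5\alpha;\mathcal{T}(\mathbb{D}))\le U_{n+1}\le q(1-.5\alpha;\mathcal{T}(\mathbb{D}))$, and $\mathcal{T}(\mathbb{D})$ is a function of $(\mathbb{D},\mathbb{D}')$ alone, not of $(\X_{n+1},Y_{n+1})$. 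So the theorem reduces to two-sided control of where $U_{n+1}$ sits relative to the empirical quantiles of $U_1,\dots,U_n$.

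The first step is to condition on $\mathbb{D}'$. Since $\mathbb{D}$, $\mathbb{D}'$ and $(\X_{n+1},Y_{n+1})$ are drawn from one i.i.d. sequence, given $\mathbb{D}'$ the map $\hF$ is fixed and the $n$ pairs of $\mathbb{D}$ together with $(\X_{n+1},Y_{n+1})$ are i.i.d.; hence $U_1,\dots,U_{n+1}$ are i.i.d., in particular exchangeable, conditionally on $\mathbb{D}'$. Continuity of $F(\cdot\mid\x)$ makes the conditional law of $Y_{n+1}$ atomless, and — using that the conditional c.d.f. estimate $\hF(\cdot\mid\x)$ is continuous and strictly increasing where it lies strictly between $0$ and $1$ — the conditional law of $U_{n+1}$ is atomless too, so a short argument gives that, given $\mathbb{D}'$, the values $U_1,\dots,U_{n+1}$ are almost surely pairwise distinct. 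Exchangeability together with this distinctness forces the rank $R$ of $U_{n+1}$ within $\{U_1,\dots,U_{n+1}\}$ to be uniform on $\{1,\dots,n+1\}$.

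It then remains to translate $\{q(.5\alpha;\mathcal{T}(\mathbb{D}))\le U_{n+1}\le q(1-.5\alpha;\mathcal{T}(\mathbb{D}))\}$ into a statement about $R$. With the paper's convention for empirical quantiles, these are order statistics $U_{(j)}$ and $U_{(k)}$ of $U_1,\dots,U_n$ (an index falling outside $\{1,\dots,n\}$ read as $\mp\infty$), so the event is exactly $\{j<R\le k\}$ and its conditional probability $(k-j)/(n+1)$ carries no dependence on $\mathbb{D}'$. One then checks that the chosen quantile indices obey $(1-\alpha)(n+1)\le k-j\le(1-\alpha)(n+1)+1$ — a ceiling/floor computation applied to $.5\alpha(n+1)$ and $(1-.5\alpha)(n+1)$, and precisely the origin of the $\tfrac1{n+1}$ slack — whence $1-\alpha\le(k-j)/(n+1)\le 1-\alpha+\tfrac1{n+1}$; taking expectation over $\mathbb{D}'$ removes the conditioning and yields the claim.

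The conceptual content — seeing that $\hF(Y\mid\X)$ is the right nonconformity score and that conditioning on $\mathbb{D}'$ restores exchangeability — is short. The delicate point is the ``no ties'' step underlying the upper bound: continuity of $F$ and $\hF$ as stated controls atoms of the conditional law of $Y$ but does not on its own prevent $\hF(\cdot\mid\x)$ from being flat on a set of positive $F(\cdot\mid\x)$-mass, so one either strengthens the reading to strict monotonicity of $\hF(\cdot\mid\x)$ (automatic for the positive density-based estimates used here) or checks that such ties cannot spoil the two stated inequalities. After that, only the index bookkeeping — making the window $k-j$ land in $[(1-\alpha)(n+1),\,(1-\alpha)(n+1)+1]$ — stands between the setup and the conclusion.
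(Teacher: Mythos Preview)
Your proposal is correct and follows essentially the same route as the paper's proof: define $U_i=\hF(Y_i\mid\X_i)$, argue these are i.i.d.\ and atomless (conditionally on $\mathbb{D}'$), and reduce the coverage statement to the standard two-sided rank/quantile bound for exchangeable scores. You are in fact more careful than the paper --- you make the conditioning on $\mathbb{D}'$ explicit and correctly flag that mere continuity of $\hF(\cdot\mid\x)$ does not by itself rule out flats (hence ties), a subtlety the paper's one-line assertion that ``$U_i$ are i.i.d.\ continuous random variables'' glosses over.
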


Under additional assumptions
\distsplit\ also obtains 
asymptotic conditional coverage and
converges to an optimal oracle band.
Two types of assumptions are required.
First, that the conditional density estimator,
$\hF$ is consistent. This assumption
is an adaptation to density estimators 
of the consistency assumption for
regression estimators in \citep{Lei2018}.
Also, we require that
$F(y|\x)$ is differentiable and
$F^{-1}(\alpha^*|\x)$ is 
uniformly smooth in a neighborhood of
$.5\alpha$ and $1-.5\alpha$.
These assumptions are formalized below.

\begin{assumption}[Consistency of density estimator]
 \label{ass:consistent_cde}
 There exist $\eta_n = o(1)$ and
 $\rho_n = o(1)$ such that
 \begin{align*}
  \P\left(\E\left[\sup_{y \in \sY}
  \left(\hF(y|\X)-F(y|\X)\right)^2
  \big|\hF\right]
  \geq \eta_n \right) \leq \rho_n
 \end{align*}
\end{assumption}

\begin{assumption}
 \label{ass:continuity}
 For every $\x \in \sX$,
 $F(y|\x)$ is differentiable.
 Also, if $q_{\alpha} = F^{-1}(\alpha)$,
 then there exists $M^{-1} > 0$ such that 
 $\inf_{\x}\frac{dF(y|\x)}{dy} \geq M^{-1}$
 in a neighborhood of
 $q_{0.5\alpha}$ and of 
 $q_{1-0.5\alpha}$.
\end{assumption}

Given the above assumptions,
\distsplit\ satisfies desirable
theoretical properties.
First, it obtains
asymptotic conditional coverage.
Also, \distsplit\ converges to
the optimal \emph{interval} according to
the commonly used \citep{parmigiani2009decision}
loss function
\begin{align*}
 L((a,b),Y_{n+1})
 &= \alpha (b-a) + (a-\nY)_{+} + (\nY-b)_{+},
\end{align*}
that is, \distsplit\ satisfies
\begin{align*}
 C(\nX) &\approx 
 \left[F^{-1}(.5\alpha|\nX); 
 F^{-1}(1-.5\alpha|\nX)\right]
\end{align*}
These results are formalized in
\cref{thm:dist_optimal}.

\begin{theorem}
 \label{thm:dist_optimal}
 Let $C_n(\X_{n+1})$ be the prediction band in
 \cref{def:dist_split} and
 $C^*(\X_{n+1})$ be the optimal prediction interval
 according to
 \begin{align*}
  L((a,b),Y_{n+1}) 
  &= \alpha (b-a)+(a-\nY)_{+} + (\nY-b)_{+}.
 \end{align*}
 Under \cref{ass:consistent_cde,ass:continuity},
 \begin{align*}
  \lambda(C_n(\X_{n+1}) \Delta C^*(\X_{n+1}))
  & = o_{\P}(1),
 \end{align*}
 where $\lambda$ is the Lebesgue measure.
\end{theorem}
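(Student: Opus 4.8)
The plan is to reduce the statement to the convergence in probability of the two endpoints of $C_n(\nX)$ to those of $C^*(\nX)$. Recall from the discussion preceding the theorem that $C^*(\nX) = [F^{-1}(0.5\alpha|\nX),\, F^{-1}(1-0.5\alpha|\nX)]$, while by \cref{def:dist_split} one has $C_n(\nX) = [\hF^{-1}(t_1|\nX),\, \hF^{-1}(t_2|\nX)]$ with $t_1 = q(0.5\alpha; \mathcal T(\mathbb D))$ and $t_2 = q(1-0.5\alpha; \mathcal T(\mathbb D))$. Using the elementary inequality $\lambda([a,b]\,\Delta\,[c,d]) \le |a-c| + |b-d|$, it is enough to show $\hF^{-1}(t_i|\nX) - F^{-1}(\tau_i|\nX) = o_{\P}(1)$ for $i = 1,2$, where $\tau_1 = 0.5\alpha$ and $\tau_2 = 1-0.5\alpha$. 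I would split this difference by the triangle inequality through $F^{-1}(t_i|\nX)$, into a \emph{quantile-level} error $|F^{-1}(t_i|\nX) - F^{-1}(\tau_i|\nX)|$ and an \emph{estimation} error $|\hF^{-1}(t_i|\nX) - F^{-1}(t_i|\nX)|$.

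First I would show the thresholds converge: $t_i - \tau_i = o_{\P}(1)$. Since $\mathbb D$ and $\mathbb D'$ are independent and $\hF$ is a function of $\mathbb D'$ only, conditionally on $\hF$ the values $\{\hF(Y_i|\X_i) : i \in \mathbb D\}$ are i.i.d.\ with c.d.f.\ $G_{\hF}(t) = \P(\hF(\nY|\nX) \le t \mid \hF)$. On the event $E_n = \{\E[\sup_y (\hF(y|\X) - F(y|\X))^2 \mid \hF] \le \eta_n\}$, which has probability at least $1 - \rho_n \to 1$ by \cref{ass:consistent_cde}, comparing $\hF(\nY|\nX)$ with the $\mathrm{Uniform}(0,1)$ variable $F(\nY|\nX)$ and optimizing a Chebyshev cut-off gives $\sup_t |G_{\hF}(t) - t| = O(\eta_n^{1/3})$. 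Combining this with the Dvoretzky--Kiefer--Wolfowitz inequality applied conditionally on $\hF$, which controls $\sup_t |\widehat G_n(t) - G_{\hF}(t)|$ for the empirical c.d.f.\ $\widehat G_n$ of those $n$ values, yields $\sup_t |\widehat G_n(t) - t| = o_{\P}(1)$; since the uniform c.d.f.\ is strictly increasing, this forces $t_i - \tau_i = o_{\P}(1)$.

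Next I would handle the inversion, where \cref{ass:continuity} enters. Fix a neighborhood $N$ of $\{q_{0.5\alpha}, q_{1-0.5\alpha}\}$ on which \cref{ass:continuity} gives $\inf_{\x} \frac{d}{dy} F(y|\x) \ge M^{-1}$; then $F^{-1}(\cdot|\x)$ is $M$-Lipschitz, uniformly in $\x$, on the corresponding range of levels. The quantile-level error is then $|F^{-1}(t_i|\nX) - F^{-1}(\tau_i|\nX)| \le M|t_i - \tau_i| = o_{\P}(1)$, valid once $n$ is large enough that $t_i$ lies in that range. For the estimation error I would take $\delta_n = \eta_n^{1/4} = o(1)$; on $E_n$, Chebyshev's inequality together with the independence of $\nX$ and $\hF$ gives $\P(\sup_y (\hF(y|\nX) - F(y|\nX))^2 > \delta_n^2 \mid \hF) \le \eta_n/\delta_n^2 = o(1)$. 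On the complementary event, at $\hat y = \hF^{-1}(t_i|\nX)$ one has $\hF(\hat y|\nX) = t_i$ by continuity of $\hF(\cdot|\nX)$, hence $|F(\hat y|\nX) - t_i| \le \delta_n$, so $\hat y \in [F^{-1}(t_i - \delta_n|\nX),\, F^{-1}(t_i + \delta_n|\nX)]$, and the Lipschitz bound yields $|\hF^{-1}(t_i|\nX) - F^{-1}(t_i|\nX)| \le M\delta_n$ (legitimate once $t_i \pm \delta_n$ lie in the level range of $N$, which holds eventually since $t_i \to \tau_i$ and $\delta_n \to 0$).

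Collecting the bounds, on an event of probability $1 - o(1)$ one gets $|\hF^{-1}(t_i|\nX) - F^{-1}(\tau_i|\nX)| \le M(\delta_n + |t_i - \tau_i|) = o_{\P}(1)$ for $i = 1,2$, and hence $\lambda(C_n(\nX)\,\Delta\,C^*(\nX)) \le \sum_{i=1}^{2} |\hF^{-1}(t_i|\nX) - F^{-1}(\tau_i|\nX)| = o_{\P}(1)$. The hard part will be the inversion step: \cref{ass:consistent_cde} controls the estimation error only in mean square and only on average over $\X$, so one must simultaneously pass to a high-probability event for $\nX$ on which the sup-norm error is genuinely small, keep all the relevant quantile levels inside the neighborhood where \cref{ass:continuity} provides the uniform-in-$\x$ Lipschitz continuity of $F^{-1}(\cdot|\x)$, and choose the rates $\delta_n$, $\eta_n/\delta_n^2$, and $t_i - \tau_i$ compatibly; everything else is routine manipulation with the triangle and Chebyshev inequalities.
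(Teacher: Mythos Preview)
Your proposal is correct and follows essentially the same architecture as the paper: reduce to convergence of the two endpoints, establish $t_i \to \tau_i$ in probability, and then use the uniform lower bound on $\frac{d}{dy}F(y|\x)$ from \cref{ass:continuity} to convert sup-norm closeness of $\hF$ to $F$ (obtained from \cref{ass:consistent_cde} via Chebyshev) into closeness of $\hF^{-1}(t_i|\nX)$ to $F^{-1}(\tau_i|\nX)$; this is exactly the content of the paper's \cref{lemma:dist-cond}, from which the theorem is stated to follow directly. The only noteworthy difference is in how you prove $t_i \to \tau_i$: the paper (in \cref{lemma:dist-1,lemma:dist-2}) splits the calibration indices into a ``good'' set $I_1$ where $|\hF(Y_i|\X_i)-F(Y_i|\X_i)|<\eta_n^{1/3}$ and a negligible ``bad'' set $I_2$, and then sandwiches the empirical quantile of $\{\hF(Y_i|\X_i)\}$ between nearby empirical quantiles of the uniforms $\{F(Y_i|\X_i)\}$, whereas you argue at the level of distribution functions, first showing $\sup_t|G_{\hF}(t)-t|=O(\eta_n^{1/3})$ and then invoking DKW; both routes are valid and yield the same conclusion, yours being slightly more streamlined at the cost of citing DKW.
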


\begin{corollary}
 Under \cref{ass:consistent_cde,ass:continuity}
 \distsplit\ achieves asymptotic
 conditional coverage.
\end{corollary}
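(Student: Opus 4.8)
The plan is to deduce asymptotic conditional coverage directly from \cref{thm:dist_optimal}, which already gives convergence of the \distsplit\ band to the oracle interval $C^*(\X_{n+1}) = [F^{-1}(.5\alpha|\X_{n+1}); F^{-1}(1-.5\alpha|\X_{n+1})]$ in Lebesgue measure. The key observation is that the oracle interval has exact conditional coverage $1-\alpha$ by construction: conditionally on $\X_{n+1} = \x$, we have $\P(Y_{n+1} \in C^*(\x) \mid \X_{n+1} = \x) = F(F^{-1}(1-.5\alpha|\x)|\x) - F(F^{-1}(.5\alpha|\x)|\x) = 1-\alpha$, using continuity of $F(\cdot|\x)$ from \cref{ass:continuity}. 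So the task reduces to showing that the small symmetric-difference between $C_n$ and $C^*$ translates into a small difference in conditional coverage, uniformly over a set $\Lambda_n$ of $\x$-values of probability $1-o_P(1)$.

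First I would make the endpoint convergence explicit: combining \cref{ass:consistent_cde} with the uniform lower bound $M^{-1}$ on the conditional density near the relevant quantiles (\cref{ass:continuity}), a standard inversion argument shows that, on an event of probability at least $1-\rho_n \to 1$, $\sup_\x |\hF^{-1}(q(\cdot;\mathcal T(\mathbb D))|\x) - F^{-1}(\cdot|\x)|$ is small (of order $M\sqrt{\eta_n}$ plus the $O((n+1)^{-1})$ gap between the empirical quantiles of $\mathcal T(\mathbb D)$ and the nominal levels $.5\alpha$, $1-.5\alpha$). Call this bound $\delta_n = o_P(1)$; this is essentially already inside the proof of \cref{thm:dist_optimal}, so I would cite it rather than redo it. Then for any $\x$ I can write
\begin{align*}
 \big|\P(Y_{n+1} \in C_n(\x) \mid \X_{n+1}=\x) - (1-\alpha)\big|
 \leq \big|F(b_n(\x)|\x) - F(b^*(\x)|\x)\big| + \big|F(a_n(\x)|\x) - F(a^*(\x)|\x)\big|,
\end{align*}
where $a_n,b_n$ and $a^*,b^*$ denote the endpoints of $C_n(\x)$ and $C^*(\x)$. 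By the fundamental theorem of calculus each term on the right is at most $(\sup_y \tfrac{dF(y|\x)}{dy})\,\delta_n$; since $F(\cdot|\x)$ is a distribution function this derivative integrates to one, but I need a pointwise upper bound near the quantiles — which I would obtain by restricting to a neighborhood where $F$ is differentiable and taking $\Lambda_n$ to be the set of $\x$ on which the endpoints stay inside that neighborhood (this set has probability $1-o_P(1)$ because the endpoints converge uniformly).

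Finally I would assemble the pieces: take $\Lambda_n = \{\x : |b_n(\x)-b^*(\x)| \vee |a_n(\x)-a^*(\x)| \leq \delta_n\}$, which satisfies $\P(\X_{n+1} \in \Lambda_n \mid \Lambda_n) = 1-o_P(1)$ by the uniform endpoint convergence, and conclude
\begin{align*}
 \sup_{\x \in \Lambda_n} \big|\P(Y_{n+1} \in C_n(\X_{n+1}) \mid \X_{n+1}=\x) - (1-\alpha)\big| = o_P(1),
\end{align*}
which is exactly the definition of asymptotic conditional coverage. The main obstacle is the bookkeeping in the second step: converting the $L^2$-in-$\x$ consistency of \cref{ass:consistent_cde} into a genuinely uniform-in-$\x$ statement about the quantile endpoints, and then controlling $\sup_y \tfrac{dF(y|\x)}{dy}$ on the relevant shrinking neighborhoods so that the density is bounded above, not merely below. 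Both are handled by choosing $\Lambda_n$ appropriately and leaning on \cref{ass:continuity}; everything else is a short deduction from \cref{thm:dist_optimal}.
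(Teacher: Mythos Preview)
The paper does not supply a separate proof of this corollary; it is simply stated immediately after \cref{thm:dist_optimal}, so there is nothing to compare line by line. Your overall architecture---extract endpoint convergence from the machinery behind \cref{thm:dist_optimal} and then translate to a coverage statement---is the natural route and more detailed than anything the paper offers.

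There is, however, a genuine gap in the translation step. To pass from $|a_n(\x)-a^*(\x)|\le\delta_n$ to $|F(a_n(\x)|\x)-F(a^*(\x)|\x)|=o_P(1)$ you invoke a mean-value bound of the form $(\sup_y \tfrac{dF(y|\x)}{dy})\,\delta_n$, and you claim the needed \emph{upper} bound on the density is ``handled by \ldots\ leaning on \cref{ass:continuity}.'' It is not: \cref{ass:continuity} provides only the \emph{lower} bound $\inf_\x \tfrac{dF(y|\x)}{dy}\ge M^{-1}$ near the quantiles, and says nothing whatsoever about an upper bound. Nothing in the paper's assumptions controls $\sup_y \tfrac{dF(y|\x)}{dy}$ uniformly in $\x$, so your mean-value inequality does not close, and shrinking $\Lambda_n$ does not help without an extra hypothesis guaranteeing that $\x$ with large density near the quantiles are rare.

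The fix is to work on the $F$-scale rather than the $y$-scale, which removes the need for any density upper bound. Writing $a_n(\x)=\hF^{-1}(t_1|\x)$ with $t_1=q(.5\alpha;\mathcal T(\mathbb D))$, one has $\hF(a_n(\x)|\x)=t_1$, and hence
\begin{align*}
\bigl|F(a_n(\x)|\x)-0.5\alpha\bigr|
\le \sup_y\bigl|F(y|\x)-\hF(y|\x)\bigr| + |t_1-0.5\alpha|.
\end{align*}
The second term is $o_P(1)$ by the empirical-quantile lemma in the paper's appendix (the one showing $U_{\lfloor\alpha\rfloor}=\alpha+o_P(1)$). For the first term, \cref{ass:consistent_cde} and Markov's inequality give, on an event of probability $1-\rho_n$, a set $\Lambda_n=\{\x:\sup_y|\hF(y|\x)-F(y|\x)|\le\eta_n^{1/4}\}$ of $\x$-probability at least $1-\eta_n^{1/2}$. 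The same bound at the upper endpoint then yields $\sup_{\x\in\Lambda_n}|\P(Y_{n+1}\in C_n(\x)\mid\X_{n+1}=\x)-(1-\alpha)|=o_P(1)$ directly, without ever touching $\tfrac{dF}{dy}$ from above.
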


\distsplit\ converges to
the same oracle as recently proposed
conformal quantile regression methods
\citep{Romano2019,Sesia2019}.
However, the experiments in
\Cref{sec:exp} show that \distsplit\
usually outperforms these methods.

If the distribution of $Y|\x$ is not
symmetric and unimodal, \distsplit\ may 
obtain larger regions than necessary.
For example, a union of two intervals better
represents a bimodal distribution than
a single interval.
The next section introduces \cdsplit\, 
which obtains prediction bands which
can be more general than intervals.

\section{\cdsplit}
\label{sec:cd}

The intervals that are output by \distsplit \
can be wider than necessary when
the target distribution is multimodal,
such as in \cref{fig:example_coverage}.
In order to overcome this issue,
\cdsplit\ yields prediction bands that
approximate $\left\{y:f(y|\x_{n+1}) > t\right\}$,
the highest posterior region.

A possible candidate for this approximation is
$\left\{y:\hf(y|\x_{n+1})>t\right\}$,
where $\hf$ is a conditional density estimator.
However, the value of $t$ that
guarantees conditional coverage varies
according to $\x$.
Thus, in order to obtain conditional validity,
it is necessary to choose $t$ adaptively.
This adaptive choice for $t$ is obtained by
making $C(\x_{n+1})$ depend only on
samples close to $\x_{n+1}$, similarly as in 
\citet{Lei2014,Barber2019,Guan2019}.

\begin{definition}[\cdsplit\ prediction band]
 \label{def:local_cde_split}
 Let $\hf(y|\x_{n+1})$
 be a conditional density estimate 
 obtained from data $\mathbb{D}'$ and
 $0 < 1-\alpha < 1$ be a coverage level.
 Let $d$ be a distance on 
 the feature space and
 $\x^{c}_1,\ldots,\x^{c}_J \in \mathcal{X}$ be 
 centroids chosen so that $d(\x^{c}_i,\x^{c}_j) > 0$.
 Consider the partition of the feature space that 
 associates each $\x \in \mathcal{X}$ to 
 the closest $\x_j^c$, i.e.,
 $\mathcal{A}=\{A_j:j=1,\ldots,J\},$ 
 where $A_j=\left\{\x \in \mathcal{X}: 
 d(\x,\x^{c}_j)< d(\x,\x^{c}_k)
 \mbox{ for every }k\neq j\right\}$.
 The \cdsplit\ prediction band for 
 $Y_{n+1}$ is:
 \begin{align*}
  C(\x_{n+1}) =
  \left\{y:\hf(y|\x_{n+1})
  \geq q(\alpha;\mathcal{T}(\x_{n+1},\mathbb{D}))\right\},
 \end{align*}
 where $\mathcal{T}(\x_{n+1},\mathbb{D})=
 \{\hf(y_i|\x_i):\x_i \in A(\x_{n+1})\}$, where 
$A(\x_{n+1})$ is the element of $\mathcal{A}$ to which $\x_{n+1}$ belongs to.
\end{definition}

\begin{remark}[Multivariate responses] 
 Although we focus on univariate targets,
 \cdsplit\ can be extended to 
 the case in which $\Y \in \mathbb{R}^p$.
 As long as an estimate of $f(\y|\x)$ is available, 
 the same construction can be applied.
\end{remark}

The bands given by \cdsplit\ control 
local coverage in the sense 
proposed by \citet{Lei2014}.

\begin{definition}[Local validity; 
Definition 1 of \citet{Lei2014}]
 Let $\mathcal{A}=\{A_j:j\geq 1\}$ be 
 a partition of $\mathcal{X}$. 
 A prediction band $C$ is locally valid 
 with respect to $\mathcal{A}$ if,
 for every $j$ and $\P$,
 \begin{align*}
  \P(Y_{n+1}\in C(\X_{n+1})|\X_{n+1} \in A_j) 
  \geq 1-\alpha
 \end{align*}
\end{definition}

\begin{theorem}[Local and marginal validity]
 \label{thm:cd_split_control}
 The \cdsplit \ band is 
 locally valid with respect to $\mathcal{A}$.
 It follows from \citet{Lei2014} that
 the \cdsplit \ band is also marginally valid.
\end{theorem}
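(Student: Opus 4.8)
The plan is to observe that, restricted to any single cell $A_j$ of the partition, \cdsplit\ is exactly the locally valid split-conformal construction of \citet{Lei2014} with conformity score $\hf(y\mid\x)$ (equivalently, nonconformity measure $-\hf$); local validity is then an instance of their result, and marginal validity follows by averaging over the cells. First I would condition on $\mathbb{D}'$. Since $\hf$ is fit on $\mathbb{D}'$ and the centroids $\x^{c}_1,\ldots,\x^{c}_J$—hence the whole partition $\mathcal{A}$—are determined by $\mathbb{D}'$ (or fixed a priori), after this conditioning both $\hf$ and $\mathcal{A}$ are deterministic, while $\mathbb{D}=\{(\X_i,Y_i)\}_{i\le n}$ and the test point $(\X_{n+1},Y_{n+1})$ remain i.i.d.\ from the original law and independent of the conditioning. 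Now fix $j$ and further condition on the event $E_j=\{\X_{n+1}\in A_j\}$ and on the count $N_j=\#\{i\le n:\X_i\in A_j\}$. On $E_j$ the test pair has the law of $(\X,Y)\mid \X\in A_j$, which is the common conditional law of each $(\X_i,Y_i)\mid \X_i\in A_j$; consequently the collection of scores
\[
\{\hf(Y_i\mid\X_i):\X_i\in A_j\}\cup\{\hf(Y_{n+1}\mid\X_{n+1})\}
\]
consists of $N_j+1$ i.i.d., hence exchangeable, real random variables.

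Next I would run the standard rank/exchangeability argument. By \cref{def:local_cde_split}, $Y_{n+1}\in C(\X_{n+1})$ holds iff $\hf(Y_{n+1}\mid\X_{n+1})\ge q\bigl(\alpha;\mathcal{T}(\X_{n+1},\mathbb{D})\bigr)$, i.e.\ iff the test score is not among the lowest $\alpha$-fraction of the calibration scores in $A(\X_{n+1})=A_j$. Because the $N_j+1$ scores are exchangeable, the rank of the test score among them is stochastically at least uniform on $\{1,\ldots,N_j+1\}$ (exactly uniform when ties have probability zero, e.g.\ under continuity of $F(\cdot\mid\x)$ and $\hf(\cdot\mid\x)$), so the probability that it lies at or above the empirical $\alpha$-quantile of the remaining $N_j$ scores is at least $1-\alpha$; this is precisely the split-conformal guarantee of \citet{Lei2014} applied inside $A_j$. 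Integrating out the (conditioning-independent) randomness of $N_j$ and of $\mathbb{D}'$ yields $\P\bigl(Y_{n+1}\in C(\X_{n+1})\mid \X_{n+1}\in A_j\bigr)\ge 1-\alpha$, i.e.\ local validity. Marginal validity then follows, as noted in \citet{Lei2014}, by the law of total probability:
\[
\P\bigl(Y_{n+1}\in C(\X_{n+1})\bigr)=\sum_j \P(\X_{n+1}\in A_j)\,\P\bigl(Y_{n+1}\in C(\X_{n+1})\mid \X_{n+1}\in A_j\bigr)\ge 1-\alpha.
\]

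I do not anticipate a genuine difficulty here; the only delicate points are bookkeeping ones, handled exactly as in the split-conformal analysis of \citet{Lei2014}: ensuring the partition is independent of $\mathbb{D}$ and of the test point (done by conditioning on $\mathbb{D}'$); treating cells $A_j$ that contain no calibration points (take $q(\alpha;\emptyset)=-\infty$, so $C(\x_{n+1})=\sY$ and coverage is trivial there); and fixing the quantile/tie convention so that the empirical $\alpha$-quantile threshold delivers the $1-\alpha$ bound cleanly. The substantive step is simply verifying that \cdsplit\ literally instantiates their locally valid split-conformal procedure with nonconformity measure $-\hf$, after which both conclusions are immediate.
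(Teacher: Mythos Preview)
Your proposal is correct and follows essentially the same route as the paper: condition so that $\hf$ and the partition are fixed, then condition on $\X_{n+1}\in A_j$ and on which (or how many) calibration points land in $A_j$, observe that the resulting scores $\hf(Y_i\mid\X_i)$ together with $\hf(Y_{n+1}\mid\X_{n+1})$ are i.i.d., and apply the standard split-conformal quantile inequality. The paper's version is terser---it conditions on the index set $\{i:\X_i\in A(\x_{n+1})\}$ rather than on $N_j$, and it does not spell out the conditioning on $\mathbb{D}'$, the empty-cell and tie conventions, or the total-probability step for marginal validity---but the argument is the same.
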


Although \cdsplit \ controls local coverage,
its performance drastically depends on 
the chosen partition of the feature space.
If the partitions are not chosen well, 
local coverage may be far from conditional coverage.
For instance, if the partitions are
defined according to the Euclidean distance
\citep{Lei2014,Barber2019}, then
the method will not scale to
high-dimensional feature spaces.
In these settings small Euclidean neighborhoods 
have few data points and, therefore,
large neighborhoods must be taken.
As a result, local coverage is 
far from conditional coverage.
We overcome this drawback by using
a specific data-driven partition.
In order to build this metric,
we start by defining the profile of a density,
which is illustrated in \cref{fig:profile}. 

\begin{definition}[Profile of a density]
\label{def:profile}
 For every $\x \in \Re^d$ and $t \geq 0$, 
 the profile of $\hf(y|\x)$, 
 $g_{\x}(t)$, is
 \begin{align*}
  g_{\x}(t) &:= 
  \int_{\{y:\hf(y|\x)\geq t\}} \hf(y|\x)dy.
 \end{align*}
\end{definition}

\Cref{def:profile} is used to
define a distance between features,
the profile distance,
as defined below.

\begin{figure}[H]
	\centering
	\hspace{-5mm}
	\includegraphics[scale=0.35]{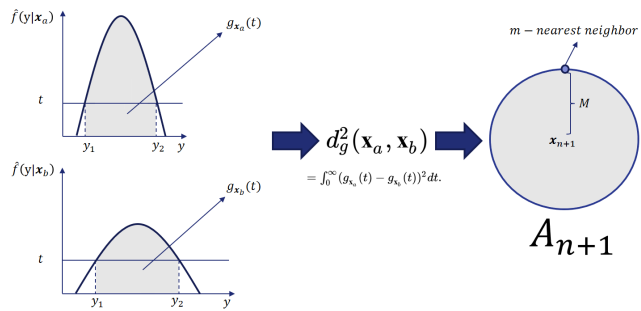}
	\caption{Illustration of the
		profile distance, which  is 
		used in \cdsplit\ for  
		partitioning the feature space.}
	\label{fig:profile}
\end{figure}

\begin{definition}[Profile distance]
 The profile distance\footnote{
  The profile distance is a metric on
  the quotient space $\mathcal{X}/\sim$, 
  where $\sim$ is the equivalence relation
  $\x_a \sim \x_b \iff 
  g_{\x^a} = g_{\x^b}$ a.e.
 }
 between  $\x_a,\x_b \in \mathcal{X}$ is
 \begin{align*} 
  d_g^2(\x_a,\x_b) &:= 
  \int_0^\infty \left(g_{\x_a}(t)-g_{\x_b}(t)\right)^2 dt,
 \end{align*}
\end{definition}

Contrary to the Euclidean distance,
the profile distance is appropriate 
even for high-dimensional data.
For instance, two points might be far
in Euclidean distance and 
still have similar conditional densities.
In this case one would like 
these points to be on the same partition element.
The profile obtains this result by
measuring the distance between instances based on
the distance between their conditional densities.
By grouping points with similar conditional densities,
the profile distance allows 
partition elements to be larger without
compromising too much the approximation of
local validity to conditional validity.
This property is illustrated in
the following examples.

\begin{example}\textbf{[Location family]}
 Let  $h(y)$ be a density, 
 $\mu(\x)$ a function, and
 $Y|\x \sim h(y-\mu(\x))$.
 In this case,
 $d_g(\x_a,\x_b) = 0$,
 for every $\x_a,\x_b \in \mathbb{R}^d$.
 For instance, if $Y|\x\sim N(\beta^t \x,\sigma^2)$, 
 then all instances have the same profile.
 Indeed, in this special scenario,
 if \cdsplit\ uses a unitary partition, then
 conditional validity is obtained.
\end{example}

\begin{example}\textbf{[Irrelevant features]}
 If $\x_S$ is a subset of the features such that
 $f(y|\x)=f(y|\x_S)$, then
 $d_g(\x_a,\x_b)$ does not depend
 on the irrelevant features, $S^c$.
 While irrelevant features do not
 affect the profile distance,
 they can have a large impact
 in the Euclidean distance in
 high-dimensional settings.
\end{example}

Also, if all samples that fall into the same partition as
$\x_{n+1}$ have the same profile 
as $\x_{n+1}$ according to $f$, then
the statistics used in \cdsplit\ ,
$\mathcal{T}(\x_{n+1},\mathbb{D})$
are i.i.d. data.
Thus, the quantile used in \cdsplit\
will be the $\alpha$ quantile of
$f(Y_{n+1}|\x_{n+1})$. 
This in turn makes
$C(\x_{n+1})$ the smallest prediction band
with conditional validity of $1-\alpha$.
\Cref{thm:convergeHPD}, below,
formalizes this statement.
\begin{theorem}
 \label{thm:convergeHPD}
 Assume that all samples that fall into the same partition as $\x_{n+1}$, say $(\X_1,Y_1),\ldots,(\X_m,Y_m)$, are such
that $g_{\x_i}=g_{\x_{n+1}}$, and that $\hf(y|\x)=f(y|\x)$ is continuous as a function of $y$ for every $\x \in \mathcal{X}$.
Let $T_m:=q(\alpha;\mathcal{T}(\x_{n+1},\mathbb{D}))$ be the cutoff
used in \cdsplit. Then, for every fixed $\alpha \in (0,1)$
$$T_m \xrightarrow[a.s.]{m\longrightarrow \infty} t^*$$
where $t^*=t^*(\x_{n+1},\alpha)$ is the cutoff associated to the oracle band (i.e., the smallest predictive region with coverage $1-\alpha$).
\end{theorem}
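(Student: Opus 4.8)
The plan is to reduce the convergence of the empirical cutoff $T_m$ to a statement about the quantile of a fixed real random variable, and then invoke the classical consistency of sample quantiles. First I would observe that, under the hypothesis $g_{\x_i} = g_{\x_{n+1}}$ for all samples in the partition element containing $\x_{n+1}$, the random variables $\hf(Y_i|\X_i)$ for $\X_i \in A(\x_{n+1})$ are i.i.d.\ copies of $W := \hf(Y_{n+1}|\x_{n+1}) = f(Y_{n+1}|\x_{n+1})$. The key point is that the law of $W$ depends on $\x$ only through the profile $g_{\x}$: since $\P(W \le t) = \P(f(Y|\x) \le t) = 1 - g_{\x}(t^+)$ (up to the boundary behaviour of $g_\x$), having a common profile forces the $\hf(Y_i|\X_i)$ to be identically distributed, and independence across $i$ follows from the i.i.d.\ structure of the sample. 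Hence $\mathcal{T}(\x_{n+1},\mathbb{D})$ is an i.i.d.\ sample from the distribution $F_W$ of $W$.

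Next I would identify $F_W$ with the distribution function $t \mapsto 1 - g_{\x_{n+1}}(t)$ (modulo the usual left/right-continuity adjustment), so that $T_m = q(\alpha;\mathcal{T}(\x_{n+1},\mathbb{D}))$ is exactly the empirical $\alpha$-quantile of an i.i.d.\ sample from $F_W$. By the Glivenko--Cantelli theorem the empirical distribution function converges uniformly a.s.\ to $F_W$; the standard consequence is that the empirical $\alpha$-quantile converges a.s.\ to the population $\alpha$-quantile $t^* := F_W^{-1}(\alpha)$ whenever $F_W$ is continuous and strictly increasing at $t^*$. Continuity of $y \mapsto f(y|\x)$ for every $\x$, together with the assumption $\hf = f$, is what guarantees this regularity of $F_W$: it rules out atoms of $W$ and flat stretches of $g_{\x_{n+1}}$ near the relevant level, so $t^*$ is the unique solution of $g_{\x_{n+1}}(t^*) = 1-\alpha$.

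It then remains to check that this $t^*$ is precisely the oracle cutoff, i.e.\ that $\{y : f(y|\x_{n+1}) \ge t^*\}$ is the smallest region with conditional coverage $1-\alpha$. This is the standard Neyman--Pearson/highest-density-region argument: among all measurable sets $B$ with $\int_B f(y|\x_{n+1})\,dy \ge 1-\alpha$, the one of minimal Lebesgue measure is a superlevel set of $f(\cdot|\x_{n+1})$, and the level is chosen so that the mass equals $1-\alpha$; by Definition~\ref{def:profile} this level is exactly the $t$ with $g_{\x_{n+1}}(t) = 1-\alpha$, which is $t^*$. Assembling the three pieces gives $T_m \to t^*$ a.s.

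The main obstacle I anticipate is the careful handling of the boundary/degeneracy cases: the profile $g_{\x}$ is right-continuous and nonincreasing but need not be strictly decreasing or continuous, and $W$ could in principle place positive mass on a level set $\{y : f(y|\x_{n+1}) = t\}$ of positive Lebesgue measure, in which case the population $\alpha$-quantile of $F_W$ is not the oracle cutoff and the empirical quantile may not converge to a single point. One must argue that the continuity assumption on $f(\cdot|\x)$ (and, if needed, an implicit assumption that $f(\cdot|\x_{n+1})$ has no flat parts at height $t^*$) excludes this, so that $F_W$ is continuous and strictly increasing in a neighbourhood of $t^*$; everything else is a routine application of Glivenko--Cantelli and the quantile continuous-mapping argument.
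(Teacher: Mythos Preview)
Your proposal is correct and follows essentially the same line as the paper: both first argue that equality of profiles forces the scores $f(Y_i\mid\X_i)$ to be i.i.d.\ (conditionally on the features) with common survival function $g_{\x_{n+1}}$, and then invoke consistency of sample quantiles to get $T_m\to t^*$. Your route through Glivenko--Cantelli and the explicit HPD/Neyman--Pearson identification of $t^*$ is slightly more detailed than the paper's direct binomial/LLN argument, and you are more careful about the regularity of $F_W$ at $t^*$ (a point the paper's concluding step leaves implicit), but the underlying idea is the same.
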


Given the above reasons,
the profile density captures 
what is needed of a meaningful neighborhood that 
contains many samples even in high dimensions.  
Indeed, consider a partition
of the feature space, $\mathcal{A}$,
that  has the property that all samples that 
belong to the same element of $\mathcal{A}$ 
have the same  oracle cutoff $t^*$. 
\Cref{thm:equivalance} shows that
the coarsest
partition
that 
has this property
is the one  induced by the profile distance.

\begin{theorem}
\label{thm:equivalance}
Assume that $\hf(y|\x)=f(y|\x)$
is continuous as a function of $y$ for every $\x \in \mathcal{X}$. For each sample $\x \in \mathcal{X}$ and miscoverage level $\alpha \in (0,1)$, let $t^*(\x,\alpha)$ be the cutoff of the oracle band
for $f(y|\x)$ with coverage $1-\alpha$.
Consider the equivalence relation $\x_a \sim \x_b \iff d_g(\x_a,\x_b)=0$. Then 
\begin{enumerate}[label=(\roman*)]
    \item if\
$\x_a \sim \x_b$, then 
$t^*(\x_a,\alpha)=t^*(\x_b,\alpha)$ for every $\alpha \in (0,1)$
\item if $\sim'$ is any other equivalence relation such that
$\x_a \sim' \x_b$ implies that
$t^*(\x_a,\alpha)=t^*(\x_b,\alpha)$ for every $\alpha \in (0,1)$, then 
$\x_a \sim' \x_b \Rightarrow \x_a \sim \x_b$.
\end{enumerate}
\end{theorem}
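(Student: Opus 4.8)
The plan is to prove the two claims separately, since they have genuinely different flavors: part (i) is a direct consequence of how the oracle cutoff is determined by the profile, while part (ii) is the maximality (``coarsest partition'') statement and is the one requiring a real argument. Throughout I would rely on the fact, established implicitly in \cref{def:profile,thm:convergeHPD}, that for a fixed $\x$ the map $t \mapsto g_{\x}(t) = \int_{\{y:f(y|\x)\geq t\}} f(y|\x)\,dy$ is the ``coverage as a function of cutoff'' function, and that the oracle cutoff $t^*(\x,\alpha)$ is characterized as the value of $t$ at which $g_{\x}(t) = 1-\alpha$ (using continuity of $f(\cdot|\x)$ to guarantee $g_{\x}$ is continuous and hence this level is attained). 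So $t^*(\x,\alpha)$ is, for each $\x$, a generalized inverse of $g_{\x}$ evaluated at $1-\alpha$.

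For part (i): if $\x_a \sim \x_b$, i.e.\ $d_g(\x_a,\x_b)=0$, then $\int_0^\infty (g_{\x_a}(t)-g_{\x_b}(t))^2\,dt = 0$, so $g_{\x_a} = g_{\x_b}$ almost everywhere on $[0,\infty)$. Since both profiles are monotone (non-increasing in $t$) and, under the continuity assumption on $f(\cdot|\x)$, continuous, equality a.e.\ forces equality everywhere: $g_{\x_a}(t) = g_{\x_b}(t)$ for all $t\geq 0$. Hence the equations $g_{\x_a}(t) = 1-\alpha$ and $g_{\x_b}(t) = 1-\alpha$ have the same solution set, and in particular $t^*(\x_a,\alpha) = t^*(\x_b,\alpha)$ for every $\alpha \in (0,1)$. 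The only care point here is the passage from ``equal a.e.'' to ``equal everywhere,'' which is exactly where monotonicity plus continuity is used, and the matching of the oracle cutoff with a level set of $g_{\x}$, which I would state as a short preliminary lemma.

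For part (ii), I would argue by contrapositive: suppose $\x_a \not\sim \x_b$, i.e.\ $d_g(\x_a,\x_b) > 0$; I must produce some $\alpha \in (0,1)$ with $t^*(\x_a,\alpha) \neq t^*(\x_b,\alpha)$, which shows any relation $\sim'$ refining the cutoff map cannot identify $\x_a$ and $\x_b$. From $d_g(\x_a,\x_b)>0$ and the continuity of both profiles, $g_{\x_a}$ and $g_{\x_b}$ differ on a set of positive measure, hence (again by continuity) on an open interval of $t$-values; pick $t_0$ in that interval with $g_{\x_a}(t_0) \neq g_{\x_b}(t_0)$, say $g_{\x_a}(t_0) = 1-\alpha_a$ and $g_{\x_b}(t_0) = 1-\alpha_b$ with $\alpha_a \neq \alpha_b$ and both in $(0,1)$. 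The subtlety---and I expect this to be the main obstacle---is that $g_{\x_a}(t_0) \neq g_{\x_b}(t_0)$ does \emph{not} immediately give a single $\alpha$ at which the two cutoffs disagree, because the oracle cutoff is the inverse direction: I need an $\alpha$ such that the level sets $\{t: g_{\x_a}(t)=1-\alpha\}$ and $\{t: g_{\x_b}(t)=1-\alpha\}$ are disjoint. I would handle this by a monotonicity/intermediate-value argument: consider the coverage value $c_0 := g_{\x_a}(t_0)$ and track whether $g_{\x_b}$ attains $c_0$ at $t_0$ or elsewhere; since $g_{\x_b}(t_0)\neq c_0$ and $g_{\x_b}$ is continuous and non-increasing, there is a whole range of coverage levels $c$ near $c_0$ for which $g_{\x_a}^{-1}(c)$ and $g_{\x_b}^{-1}(c)$ are separated, and choosing $\alpha = 1-c$ for such a $c$ completes the argument. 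One must be slightly careful about flat stretches of the profiles (where the generalized inverse is set-valued); restricting to a $t_0$ and a neighborhood where at least one profile is strictly decreasing, or else working with the convention that $t^*$ is the infimum of the level set, removes this ambiguity. Finally, assembling (i) and (ii): (i) says $\sim$ implies equal cutoffs for all $\alpha$, so $\sim$ is one of the admissible relations in the family considered in (ii); (ii) says every admissible $\sim'$ refines $\sim$; together these say $\sim$ is the coarsest admissible relation, which is precisely the claim that the profile-distance partition is the coarsest with the stated oracle-cutoff property.
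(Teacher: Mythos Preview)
Your argument for part (i) is correct and matches the paper's, though you spell out more detail (the paper simply refers back to the proof of \cref{thm:convergeHPD}, where it is shown that $g_{\x_a}=g_{\x_b}$ forces $f(Y|\x_a)$ and $f(Y|\x_b)$ to have the same distribution, hence the same oracle cutoff).

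For part (ii) your contrapositive route can be made to work, but it is noticeably more laborious than what the paper does. The paper argues \emph{directly}: assume $t^*(\x_a,\alpha)=t^*(\x_b,\alpha)$ for all $\alpha\in(0,1)$, and use the defining relation $g_{\x}(t^*(\x,\alpha))=1-\alpha$ to conclude that $g_{\x_a}$ and $g_{\x_b}$ agree at every point of the form $t^*(\x_a,\alpha)$. Since $f(\cdot|\x_a)$ is continuous, the range $\{t^*(\x_a,\alpha):\alpha\in(0,1)\}$ is exactly $\mathrm{Im}\,f(\cdot|\x_a)$, and outside this common range both profiles are trivially $0$ or $1$; hence $g_{\x_a}=g_{\x_b}$ and $\x_a\sim\x_b$. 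In other words, the paper exploits that $\alpha\mapsto t^*(\x,\alpha)$ is essentially the inverse of $g_{\x}$, so equal inverses force equal profiles in one line. Your contrapositive instead starts from a point $t_0$ where the profiles differ and tries to manufacture a single $\alpha$ at which the cutoffs disagree, which is why you run into the flat-stretch and generalized-inverse issues you flag; these are resolvable with the conventions you suggest, but they are entirely sidestepped by the direct argument.
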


We therefore always use \cdsplit\ with the profile distance.
In order to compute the prediction bands, we still need to define
the centroids $\x_i^c$.
Ideally, the partitions should be such that (i) all sample points inside a given element of the partition have similar profile, and (ii) sample points that belong to different elements of the partition have profiles that are very different from each other.
We accomplish this by choosing the partitions by applying a k-means++ clustering algorithm
\citep{arthur2007k}, but
using the profile distance instead of the Euclidean one.
This is done by
applying the standard (Euclidean)
k-means++ algorithm
to the data
points 
$\w_i:=\tilde{g}(\x_i)$, where 
$\tilde{g}(\x_i)$ is a discretization of the function
$g(\x_i)$, obtained by evaluating
$g(\x_i)$
on a grid of values.
$\w_1^c$,\ldots,$\w_J^c$  are then the centroids of such clusters. Figure \ref{fig:partitions} illustrates the partitions that are obtained in one dataset.
The profile distance allows samples that are far from each other in the Euclidean sense to fall into the same element of the partition. This is the key reason why our method scales to high-dimensional datasets.

\begin{figure}
 \centering
 \hspace{-10mm}\includegraphics[scale=0.32]{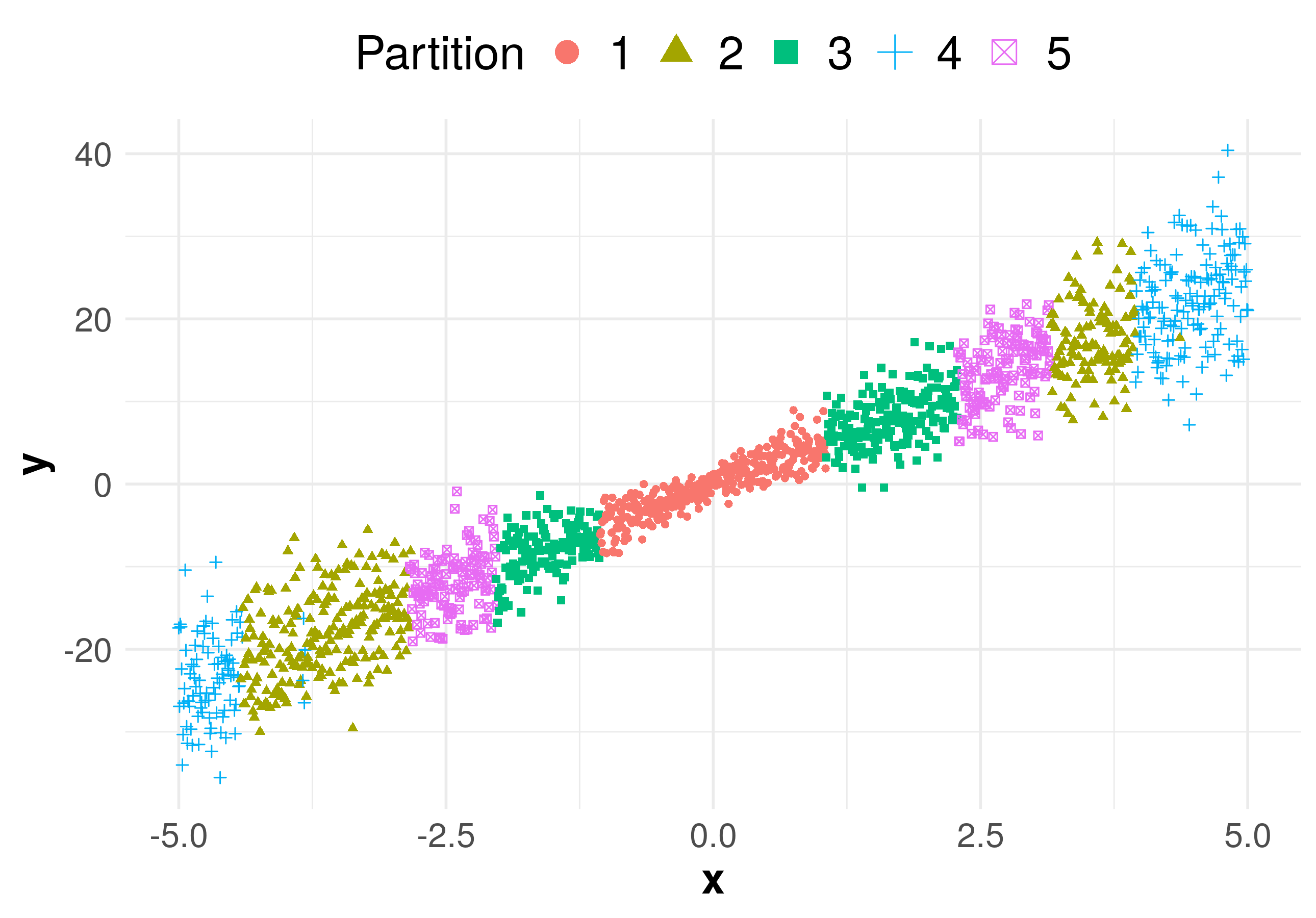}
 \caption{Scatter plot of data generated according to $Y|x \sim N(5x,1+|x|)$. Colors indicate partitions that were obtained using the profile of the estimated densities. Because  points that are far from each other on the $x$-axis can belong to the same element of the partition, this approach to  partition
 the feature space increases the number of data points that belong to each element. These elements however preserve the optimal cutoff (Theorem \ref{thm:equivalance}).}
 \label{fig:partitions}
\end{figure}

Algorithm \ref{alg:cd} shows pseudo-code for implementing  \cdsplit.

\begin{algorithm}
  \caption{ \small \cdsplit}\label{alg:cd}
  \algorithmicrequire \ {\small Data $(\x_{i},Y_{i})$, $i=1,...,n$, miscoverage level $\alpha \in (0,1)$, algorithm $\mathcal{B}$ for fitting  conditional density function, number of elements of the partition $J$.}\\ 
   \algorithmicensure \ {\small Prediction band for $\x_{n+1}\in\mathbb{R}^d$}
    \begin{algorithmic}[1]
    \STATE Randomly split $\{1,2,...,n\}$ into two subsets $\mathbb{D}$ and $\mathbb{D}'$
     \STATE Fit $\hat{f} = \mathcal{B}(\{(\x_i,Y_i):i \in \mathbb{D}' \})$ \codecomment{Estimate cumulative density function}
     \STATE Compute $\mathcal{A}$, the partition of $\mathcal{X}$, by applying 
     k-means++ on the profiles of the samples in $\mathbb{D}$'
          \STATE 
     Compute $g_{\x_{n+1}}(t)=\int_{\{y:\hat{f}(y|\x)\geq t\}}  \hat{f}(y|\x)dy$, for all $t>0$ \codecomment{Profile of the density (Definition \ref{def:profile})}
     \STATE 
     Find $A(\x_{n+1}) \in \mathcal{A}$, the element
     of $\mathcal{A}$
     such that $\x_{n+1} \in \mathcal{A}$

     \STATE 
     Compute $g_{\x_i}(t)=\int_{\{y:\hat{f}(y|\x)\geq t\}}  \hat{f}(y|\x)dy$, for all $t>0$ and $i \in \mathbb{D}$ \codecomment{Profile of the densities (Definition \ref{def:profile})}
     \STATE  Let $\mathcal{T}(\x_{n+1},\mathbb{D})=
    \{\hat{f}(y_i|\x_i), i \in \mathbb{D}: \x_i \in A(\x_{n+1})\}$
     \STATE Let $t=q(\alpha;\mathcal{T}(\x_{n+1},\mathbb{D}))$ \codecomment{Compute the $\alpha$- quantile of the set $\mathcal{T}(\x_{n+1},\mathbb{D})$}
     \STATE \textbf{return} $\left\{y:\hat{f}(y|\x^{*})\geq t\right\}$
  \end{algorithmic}
  \end{algorithm}



\subsection{Multiclass classification}

If the sample space $\mathcal{Y}$ is discrete,
 we use a similar construction to that of Definition \ref{def:local_cde_split}.
 More precisely, the \cdsplit\ prediction band is given by
 \begin{align*}
    C(\x_{n+1})=\left\{y:\widehat{\P}(Y=y|\x_{n+1})\geq q(\alpha;\mathcal{T}(\x_{n+1},\mathbb{D}))\right\},
\end{align*}
where 
\begin{align*}
    \mathcal{T}(\x_{n+1},\mathbb{D})=
    \left\{\widehat{\P}(Y_i=y_i|\x_i), i = 1,\ldots,n: \x_i \in A(\x_{n+1})\right\},
\end{align*}
$A(\x_{n+1})$ is the element of $\mathcal{A}$ to which $\x_{n+1}$ belongs to, and
$$d_g^2(\x_a,\x_b)=\sum_{y \in \mathcal{Y}} \left(\widehat{\P}(Y=y|\x_a)-\widehat{\P}(Y=y|\x_b)\right)^2.$$

Theorems analogous to those presented in the last section hold in the  classification
setting as well.
\begin{remark}
While  \cdsplit\ is developed to control the coverage of $C$ conditional on the value $\x_{n+1}$, in a classification setting some methods  (e.g.  \citealt{Sadinle2019}) control \emph{class-specific coverage}, defined as  $$\P(Y_{n+1} \in C(\X_{n+1})|Y_{n+1}=y)\geq 1-\alpha_y.$$ 
\end{remark}

\section{Experiments}
\label{sec:exp}

We consider the following settings
with $d=20$ covariates:
\begin{itemize}
    \item\textbf{[Asymmetric]} 
    $\X=(X_1,\ldots,X_d)$, with $X_i \overset{\text{iid}}{\sim} \mbox{Unif}(-5,5)$, and $Y|\x = 5x_1 + \epsilon$,
    where $\epsilon \sim \mbox{Gamma}(1+2|x_1|,1+2|x_1|)$
    \item\textbf{[Bimodal]}
    $\X=(X_1,\ldots,X_d)$, with $X_i \overset{\text{iid}}{\sim} \mbox{Unif}(-1.5,1.5)$, and $Y|\x \sim 0.5\mbox{N}(f(\x)-g(\x),\sigma^2(\x))+0.5\mbox{N}(f(\x)+g(\x),\sigma^2(\x)),$
    with $f(\x)=(x_1-1)^2 (x_1+1),$ $g(\x)=2\I(x_1 \geq -0.5) \sqrt{x_1+0.5}$,
    and $\sigma^2(\x)=1/4+|x_1|$.
    This is the example from \citep{Lei2014} with $d-1$ irrelevant variables added.
    \item\textbf{[Heteroscedastic]}
    $\X=(X_1,\ldots,X_d)$, with $X_i \overset{\text{iid}}{\sim} \mbox{Unif}(-5,5)$, and $Y|\x \sim \mbox{N}(x_1,1+|x_1|)$
    \item\textbf{[Homoscedastic]}
    $\X=(X_1,\ldots,X_d)$, with $X_i \overset{\text{iid}}{\sim} \mbox{Unif}(-5,5)$, and $Y|\x \sim \mbox{N}(x_1,1)$
\end{itemize}

We compare the performance of the following methods:
\begin{itemize}
 \item \textbf{[Reg-split]} 
 The regression-split method \citep{Lei2018}, 
 based on the conformal score 
 $|Y_i-\widehat{r}(\x_i)|$, where 
 $\widehat{r}$ is an estimate of
 the regression function.
 
 \item \textbf{[Local Reg-split]} 
 The local regression-split method \citep{Lei2018}, 
 based on the conformal score $\frac{|Y_i-\widehat{r}(\x_i)|}{\widehat{\rho}(\x_i)}$,
 where $\widehat{\rho}(\x_i)$ is
 an estimate of the conditional mean
 absolute deviation of $|Y_i-r(\x_i)|\x_i$.

 \item \textbf{[Quantile-split]} 
 The conformal quantile regression method \citep{Romano2019,Sesia2019}, based on
 conformalized quantile regression. 

 \item \textbf{[Dist-split]} 
 From \cref{sec:dist}.

 \item \textbf{[CD-split]} 
 From \cref{sec:cd} with
 partitions of size $J=\ceil{n/100}$.
\end{itemize}

For each sample size, $n$, 
we use a coverage level of $1-\alpha=90\%$ and
run each setting 5,000 times.
In order to make fair comparisons between 
the various approaches,
we use random forests \citep{Breiman2001} to
estimate all quantities needed, namely:
the regression function in Reg-split,
the conditional mean absolute deviation 
in Local Reg-split, 
the conditional quantiles via 
quantile forests \citep{Meinshausen2006} 
in Quantile-split,
and the conditional density 
via FlexCode \citep{Izbicki2017}
in \distsplit\ and \cdsplit .
A conditional cumulative distribution estimate,
$\widehat{F}(y|\x)$ is obtained by 
integrating the conditional density estimate:
$\widehat{F}(y|\x) = 
\int_{-\infty}^y \widehat{f}(y|\x)dy$.
The tuning parameters of all methods were set to be the default values of 
 the packages that were used.

\begin{figure*}[hp!]
        \centering
\subfloat{ \includegraphics[scale=0.31]{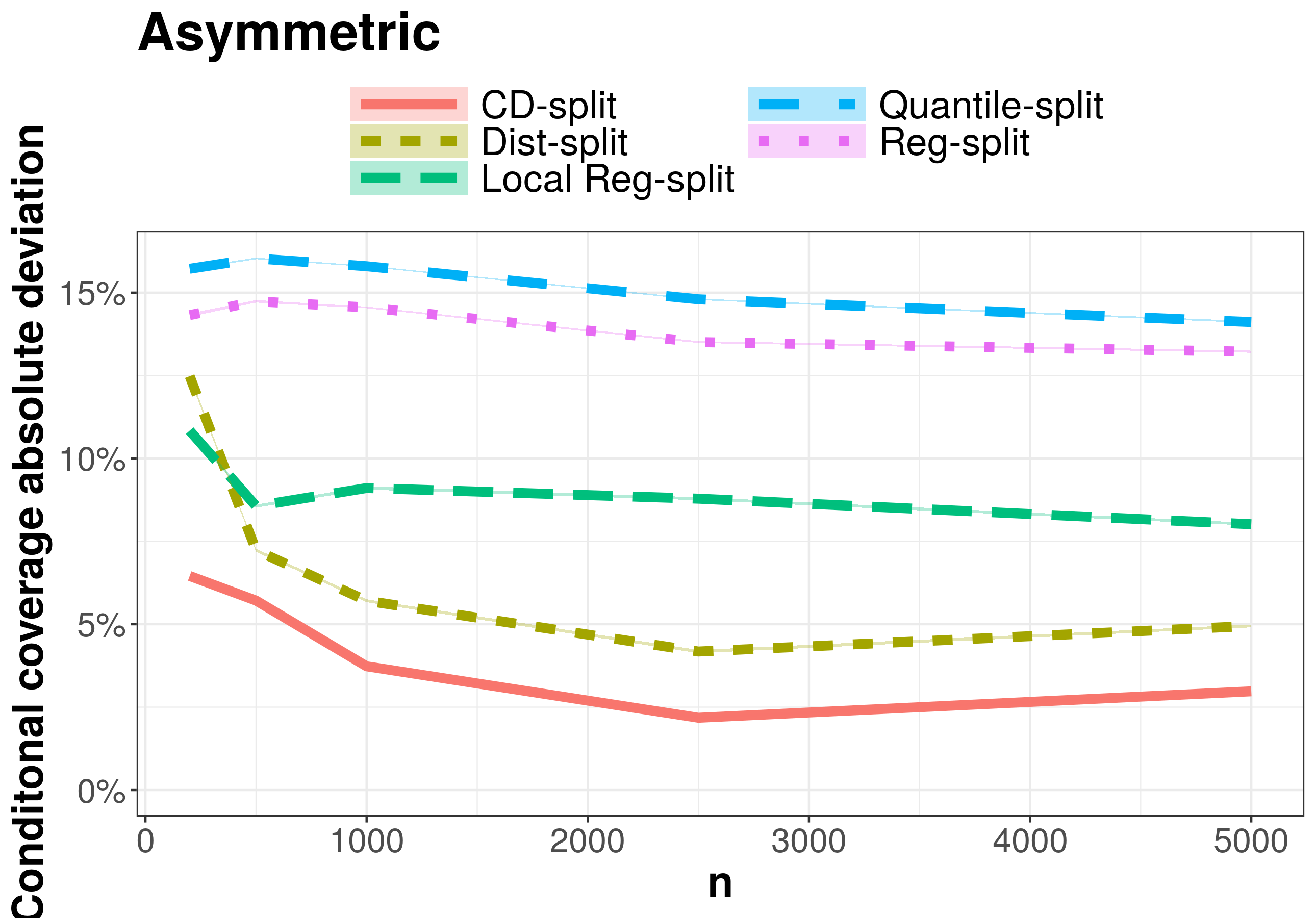}}\hspace{2mm}
\subfloat{ \includegraphics[scale=0.31]{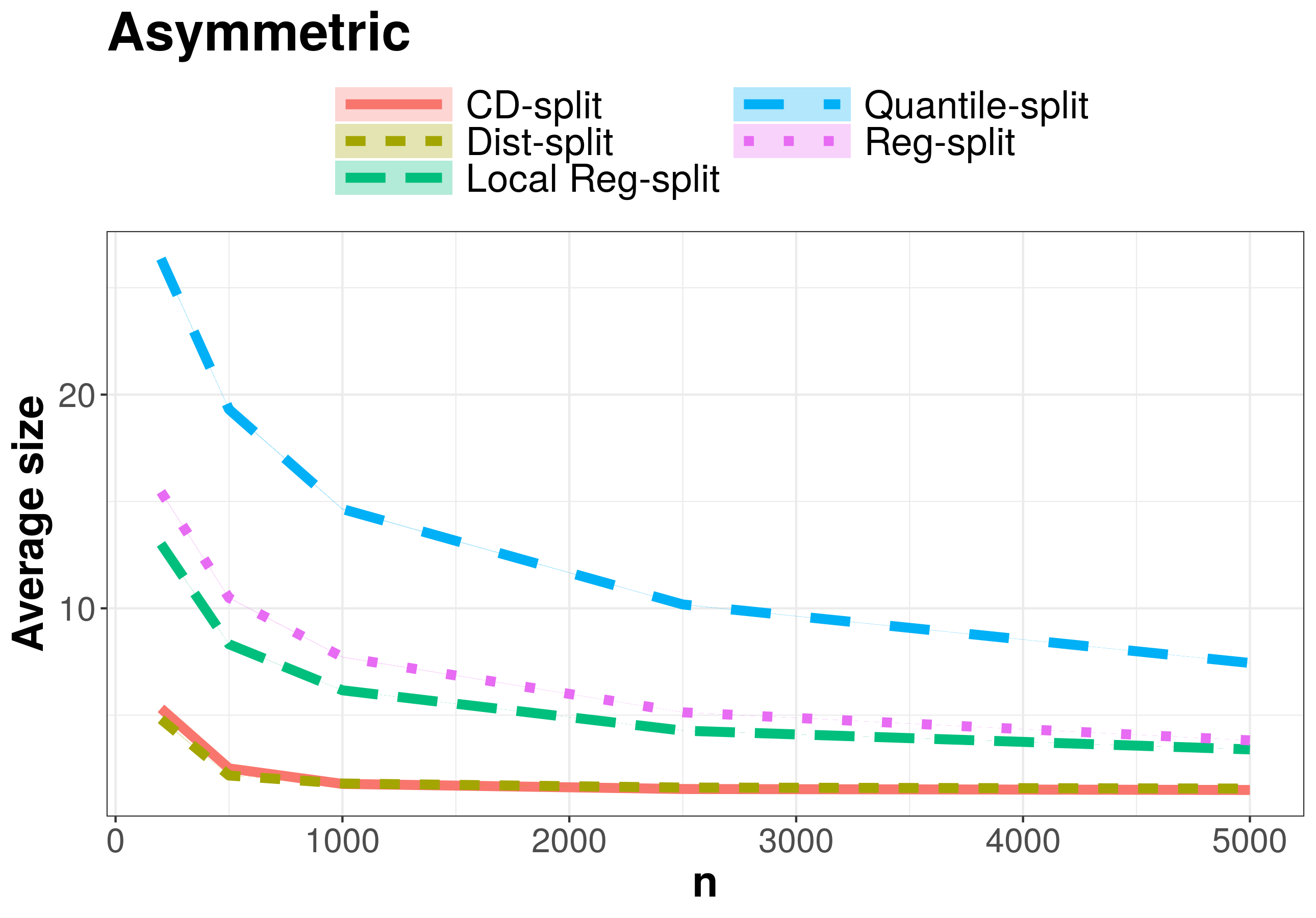}}\\[2mm]
\subfloat{\includegraphics[scale=0.32]{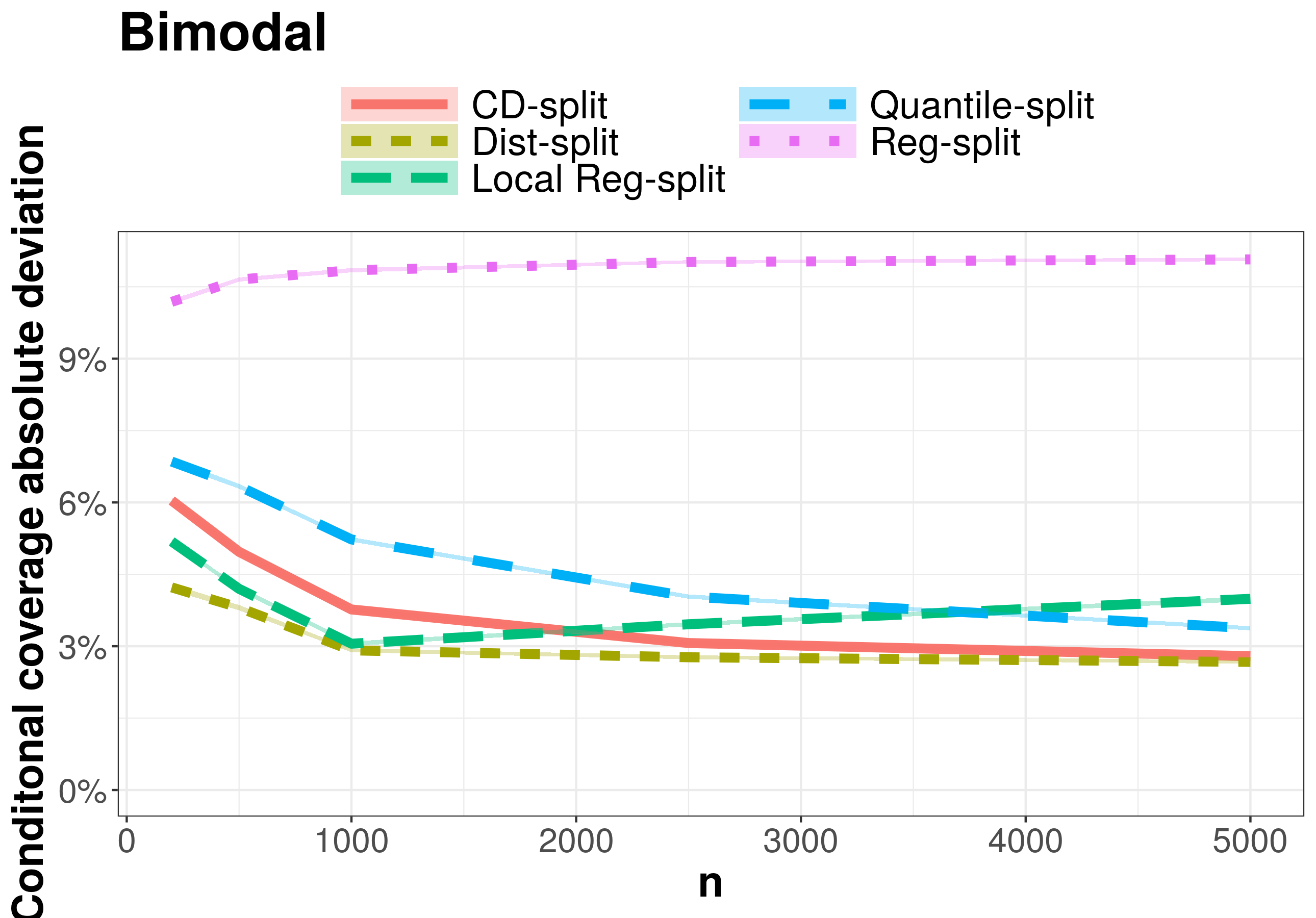}}\hspace{2mm} 
\subfloat{ \includegraphics[scale=0.31]{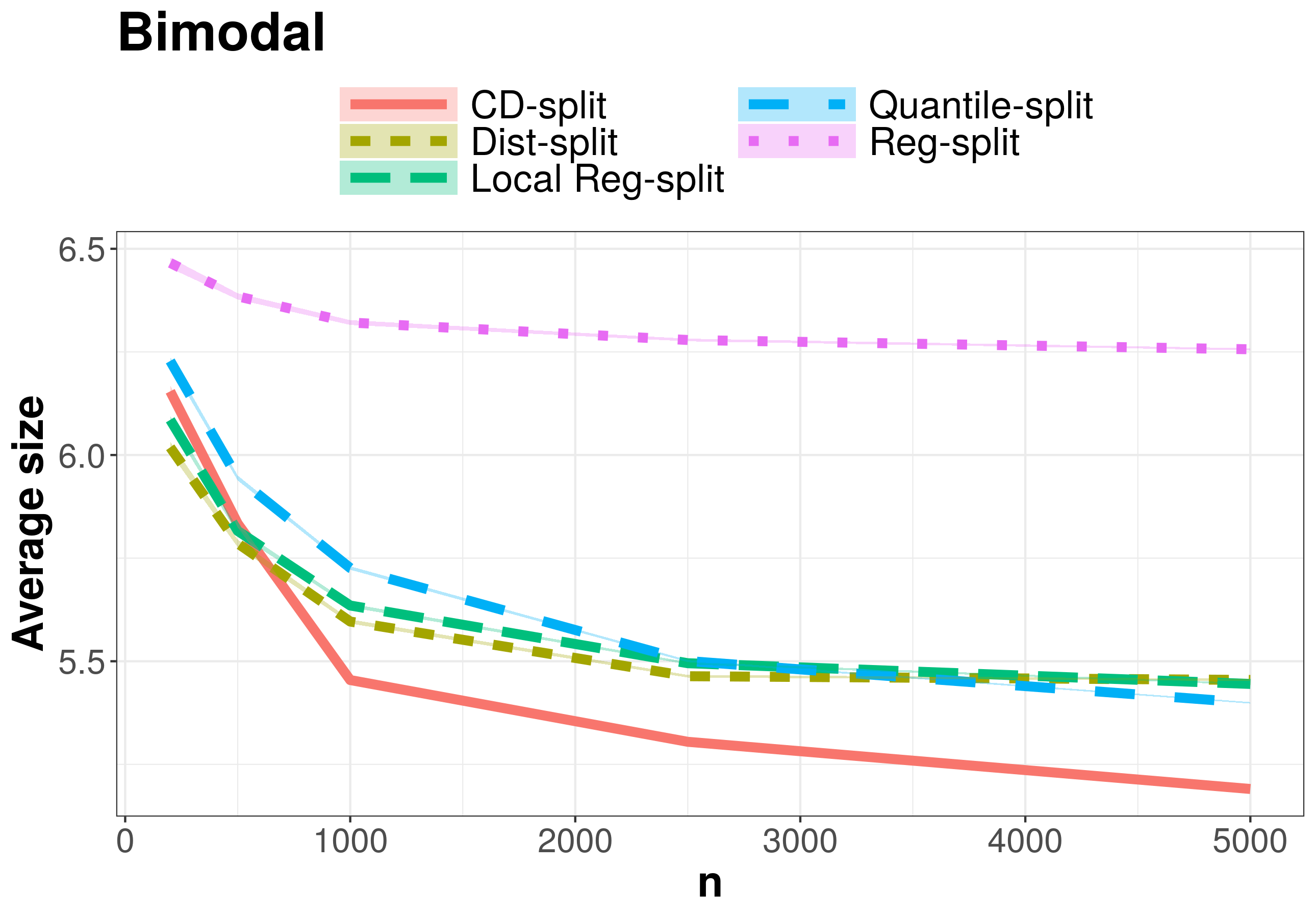}}\\[2mm]
\subfloat{\includegraphics[scale=0.32]{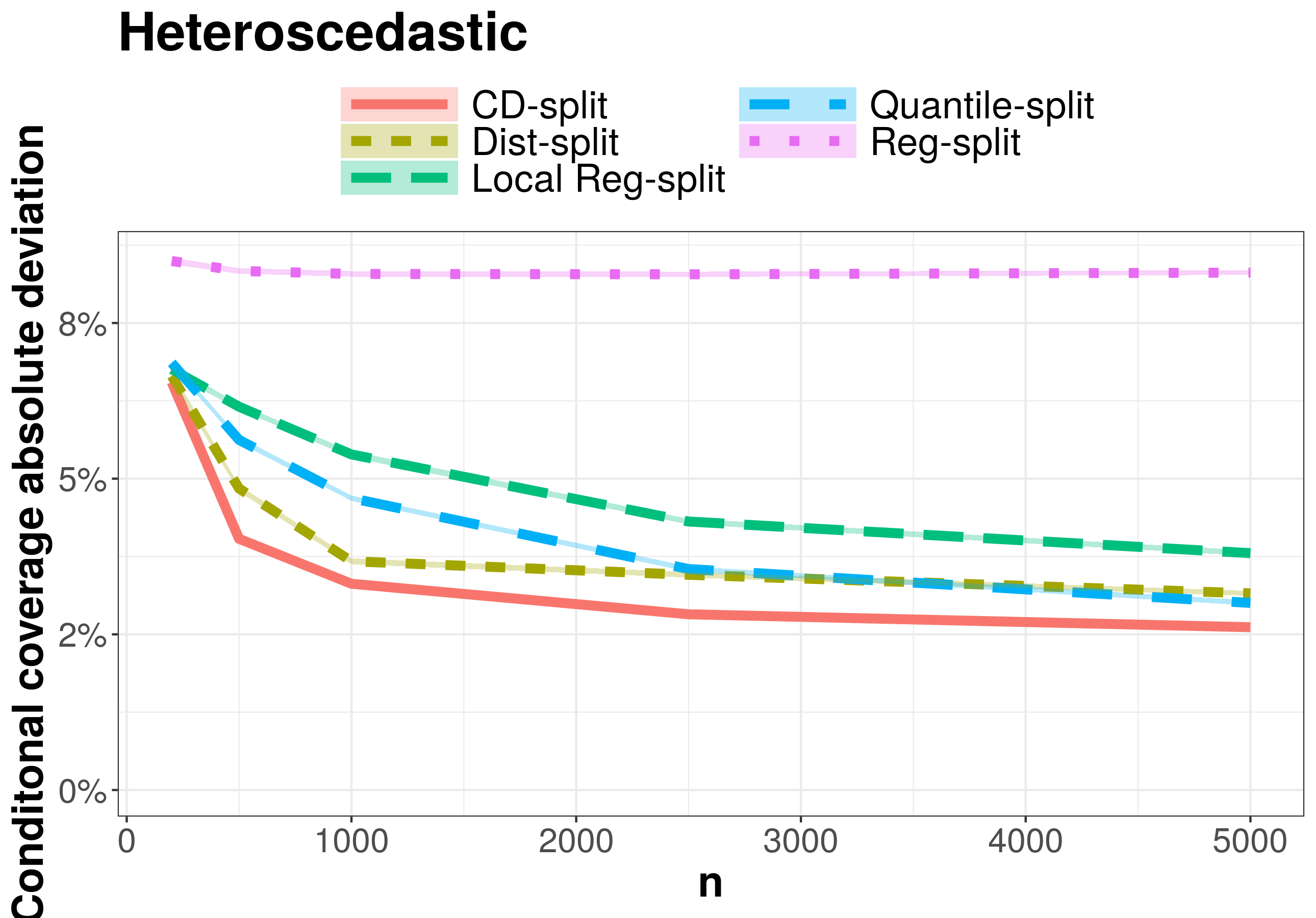}}\hspace{2mm} 
\subfloat{ \includegraphics[scale=0.31]{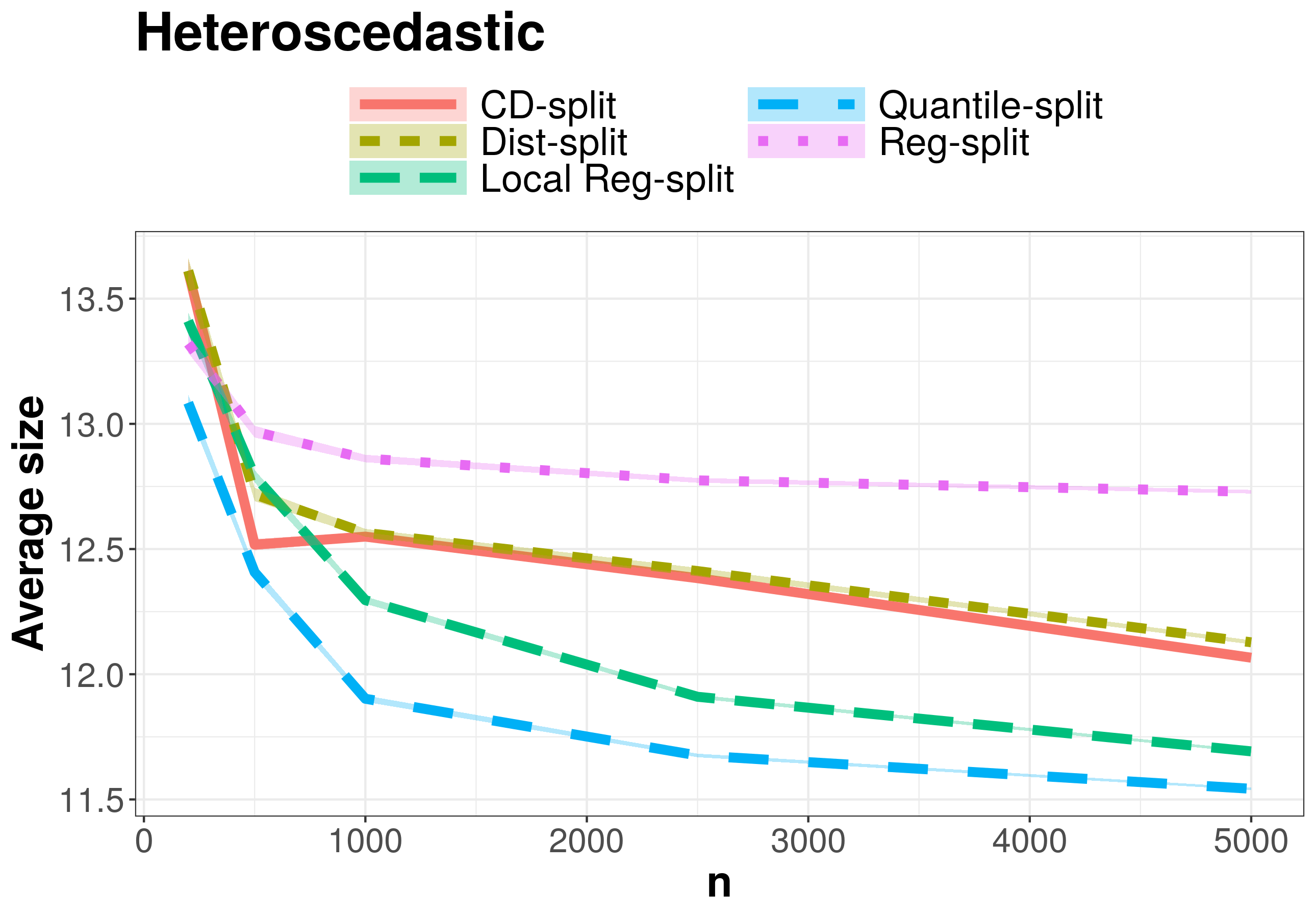}}\\[2mm]
\subfloat{\includegraphics[scale=0.31]{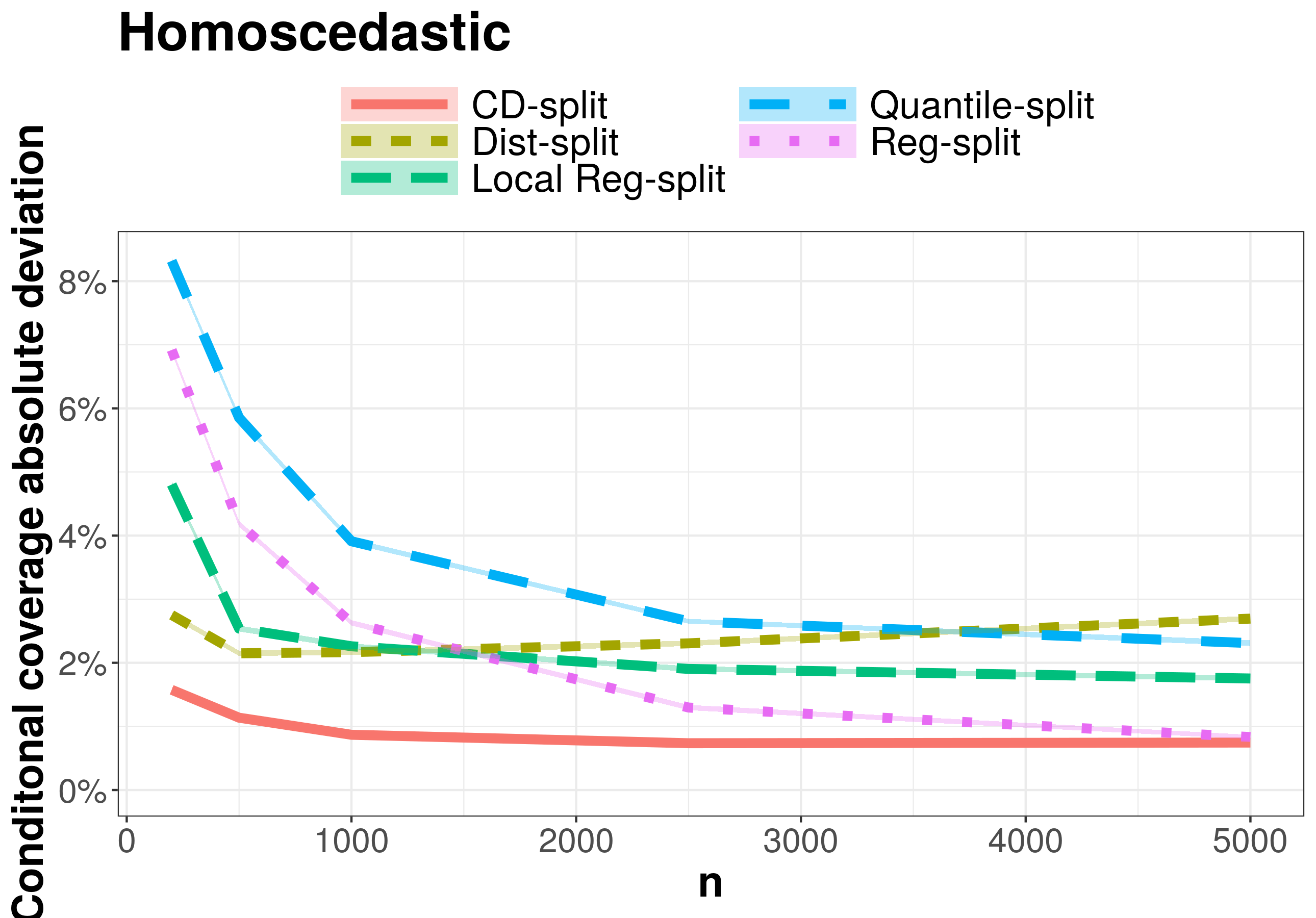}}\hspace{2mm} 
\subfloat{ \includegraphics[scale=0.31]{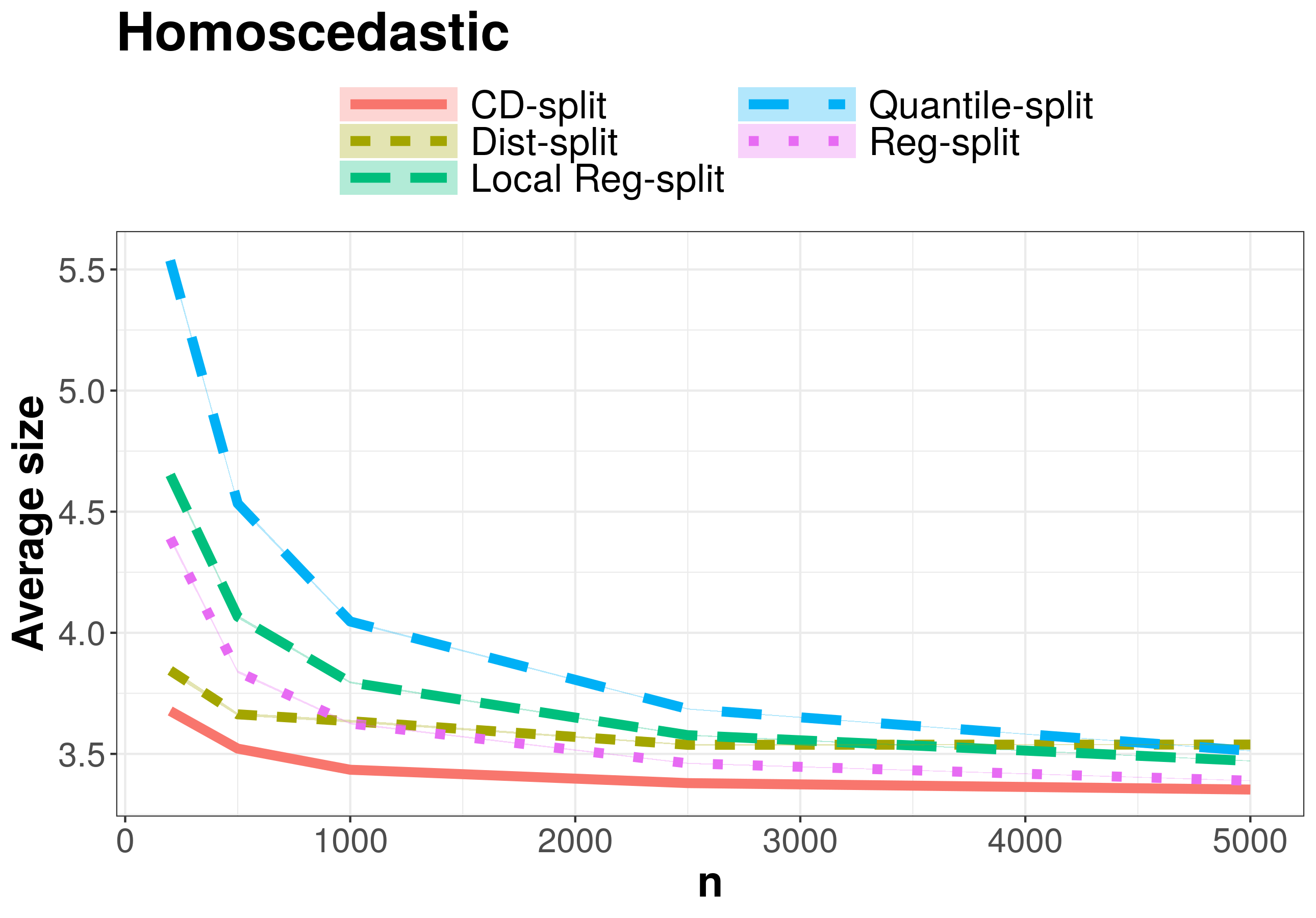}}
\vspace{-2mm} 
        \caption{\footnotesize
        Performance of each regression method as a function of the sample size. Left panels show how much the conditional coverage vary with $\x$;  right panels display the average size of the prediction bands.} 
\label{fig:regression} 
\end{figure*}

\Cref{fig:regression} shows the performance of 
each method as a function of the sample size.
While the left side figures display how well
each method controls conditional coverage,
the right side displays the average size
of the regions that are obtained.
The control of the conditional coverage is
measured through the conditional coverage
absolute deviation, that is,
$\E[|\P(Y^* \in C(\X^*)|\X^*)-(1-\alpha)|]$.
Since all of the methods obtain marginal coverage
very close to the nominal $90\%$ level,
this information is not displayed in the figure.
\Cref{fig:regression} shows that,
in all settings, \cdsplit\ is
the method which best controls conditional coverage.
Also, in most cases its 
prediction bands also have the smallest size.
Similarly, \distsplit\ frequently is 
the second method with
both highest control of conditional coverage 
and also smallest prediction bands.

We also apply \cdsplit \ to 
a classification setting. 
We consider  $\X=(X_1,\ldots,X_d)$,
with $X_i \overset{\text{iid}}{\sim} N(0,1)$
and $Y|\X$ follows the logistic model,
$\P(Y=i|\x) \propto 
\exp\left\{\boldsymbol{\beta} \cdot \x \right\}$,
where $\boldsymbol{\beta}=(-6, -5, -1.5, 0, 1.5, 5, 6)$. 
We compare \cdsplit\ to Probability-split, 
the method described in \citet[Sec. 4.3]{Sadinle2019}, 
which has the goal of controlling global coverage. Probability-split is a particular case of \cdsplit: 
it corresponds to applying \cdsplit \ 
with $J=1$ partitions.
\Cref{fig:classification} shows the results. 
\cdsplit\ better controls conditional coverage. 
On the other hand, its prediction bands are,
on average, larger than those of Probability-split.
 
\begin{figure*}[h!]
        \centering
\subfloat{ \includegraphics[scale=0.32]{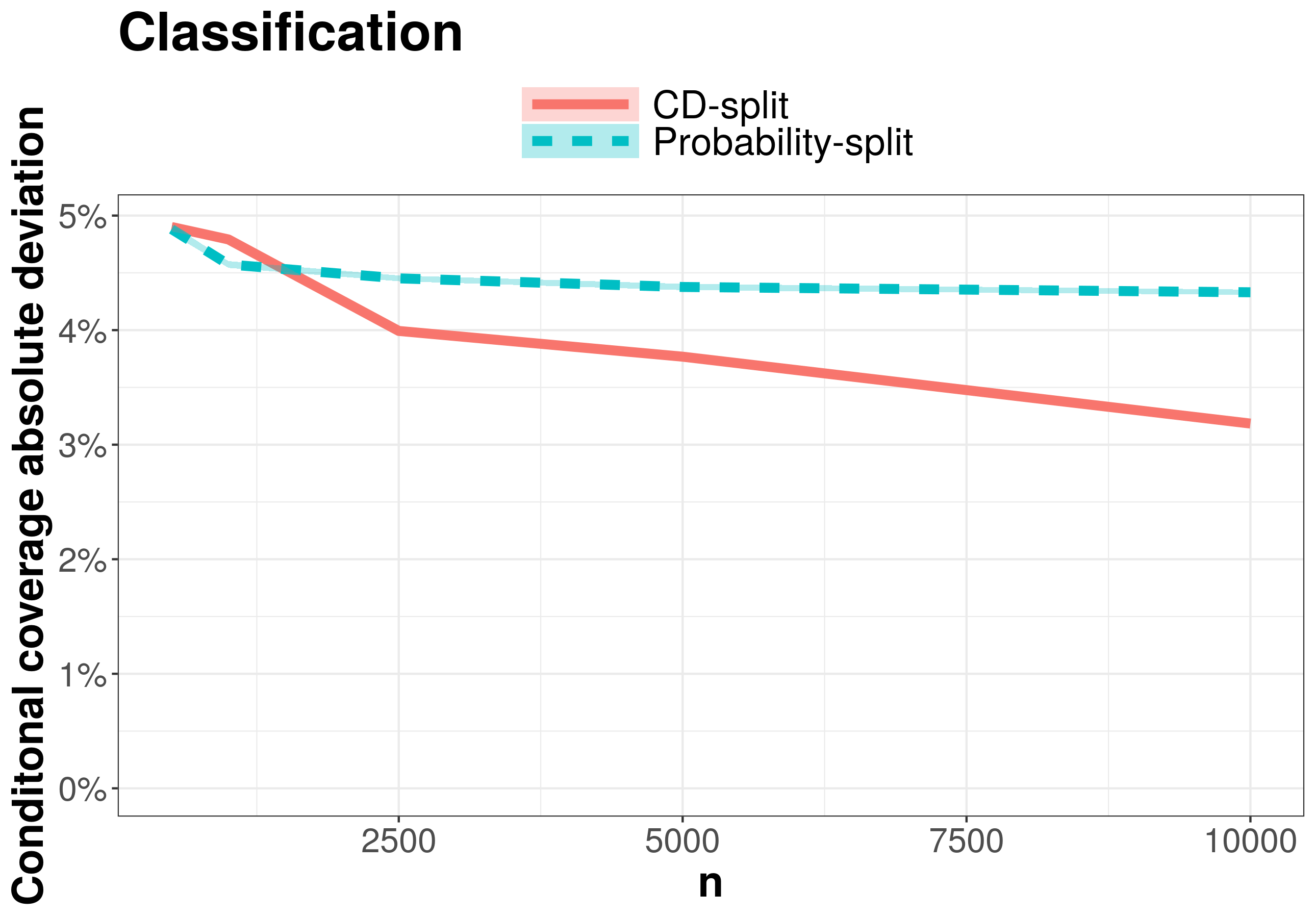}}\hspace{2mm}
\subfloat{ \includegraphics[scale=0.32]{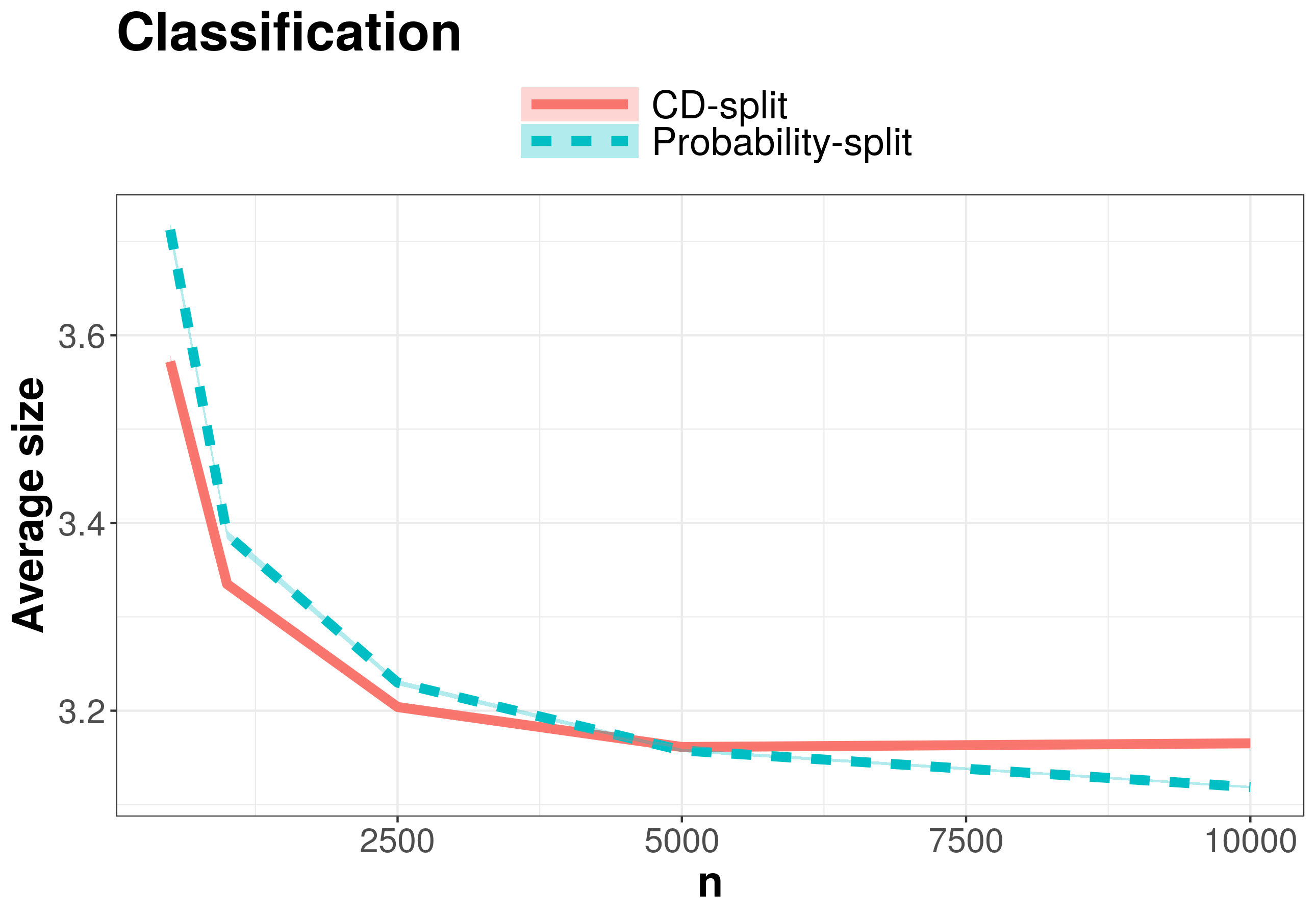}}
\vspace{-2mm}
        \caption{\footnotesize
        Performance of each classification method as a function of the sample size. Left panel shows how much the conditional coverage vary with $\x$;  right panel displays the average size of the prediction bands.} 
\label{fig:classification} 
\end{figure*}

\section{Final remarks}
\label{sec:final}

We introduce \distsplit\ and \cdsplit ,
which obtain asymptotic conditional coverage
and converge to optimal oracle bands,
even in high-dimensional feature spaces.
These results do not require assumptions
about the dependence between
the target variable and the features.
Both methods are based on estimating conditional densities.
While  \distsplit\ necessarily leads to intervals, 
which are easier to interpret, 
\cdsplit\ leads to smaller prediction regions.
A simulation study shows that
both methods yield smaller prediction bands and
better control of conditional coverage than
other methods in the literature
under a variety of settings.
We also show that \cdsplit \ leads to
good results in classification problems.

\cdsplit\ is based on
a novel data-driven metric on the feature space that
appropriate for defining neighborhoods for conformal methods,
in particular in high-dimensional settings.
It might be possible to use this metric with
other conformal methods to obtain
asymptotic conditional coverage.

R code for implementing \distsplit \ and \cdsplit \ is
available on \url{https://github.com/rizbicki/predictionBands}.

\subsubsection*{Acknowledgements}

This study was financed in part by the Coordena\c{c}\~{a}o de Aperfei\c{c}oamento de Pessoal de N\'{\i}vel Superior - Brasil (CAPES) - Finance Code 001. 
Rafael Izbicki is grateful for the financial support of FAPESP (grants 2017/03363-8 and 2019/11321-9) and CNPq (grant 306943/2017-4).
The authors are also grateful for the suggestions given by Luís Gustavo Esteves and Jing Lei.

\bibliography{main}

\begin{thebibliography}{}

\bibitem[Barber et~al., 2019]{Barber2019}
Barber, R.~F., Cand{\`e}s, E.~J., Ramdas, A., and Tibshirani, R.~J. (2019).
\newblock The limits of distribution-free conditional predictive inference.
\newblock {\em arXiv preprint arXiv:1903.04684}.

\bibitem[Breiman, 2001]{Breiman2001}
Breiman, L. (2001).
\newblock Random forests.
\newblock {\em Machine learning}, 45(1):5--32.

\bibitem[Guan, 2019]{Guan2019}
Guan, L. (2019).
\newblock Conformal prediction with localization.
\newblock {\em arXiv preprint arXiv:1908.08558}.

\bibitem[Izbicki and Lee, 2017]{Izbicki2017}
Izbicki, R. and Lee, A.~B. (2017).
\newblock Converting high-dimensional regression to high-dimensional
  conditional density estimation.
\newblock {\em Electronic Journal of Statistics}, 11(2):2800--2831.

\bibitem[Lei et~al., 2018]{Lei2018}
Lei, J., G'Sell, M., Rinaldo, A., Tibshirani, R.~J., and Wasserman, L. (2018).
\newblock Distribution-free predictive inference for regression.
\newblock {\em Journal of the American Statistical Association},
  113(523):1094--1111.

\bibitem[Lei and Wasserman, 2014]{Lei2014}
Lei, J. and Wasserman, L. (2014).
\newblock Distribution-free prediction bands for non-parametric regression.
\newblock {\em Journal of the Royal Statistical Society: Series B (Statistical
  Methodology)}, 76(1):71--96.

\bibitem[Lueckmann et~al., 2017]{Lueckmann2017}
Lueckmann, J.-M., Goncalves, P.~J., Bassetto, G., {\"O}cal, K., Nonnenmacher,
  M., and Macke, J.~H. (2017).
\newblock Flexible statistical inference for mechanistic models of neural
  dynamics.
\newblock In {\em Advances in Neural Information Processing Systems}, pages
  1289--1299.

\bibitem[Meinshausen, 2006]{Meinshausen2006}
Meinshausen, N. (2006).
\newblock Quantile regression forests.
\newblock {\em Journal of Machine Learning Research}, 7(Jun):983--999.

\bibitem[Neter et~al., 1996]{Neter1996}
Neter, J., Kutner, M.~H., Nachtsheim, C.~J., and Wasserman, W. (1996).
\newblock {\em Applied linear statistical models}, volume~4.
\newblock Irwin Chicago.

\bibitem[Papadopoulos, 2008]{Papadopoulos2008}
Papadopoulos, H. (2008).
\newblock Inductive conformal prediction: Theory and application to neural
  networks.
\newblock In {\em Tools in artificial intelligence}. IntechOpen.

\bibitem[Papamakarios et~al., 2017]{Papamakarios2017}
Papamakarios, G., Pavlakou, T., and Murray, I. (2017).
\newblock Masked autoregressive flow for density estimation.
\newblock In {\em Advances in Neural Information Processing Systems}, pages
  2338--2347.

\bibitem[Parmigiani and Inoue, 2009]{parmigiani2009decision}
Parmigiani, G. and Inoue, L. (2009).
\newblock {\em Decision theory: Principles and approaches}, volume 812.
\newblock John Wiley \& Sons.

\bibitem[Romano et~al., 2019]{Romano2019}
Romano, Y., Patterson, E., and Cand\`es, E.~J. (2019).
\newblock Conformalized quantile regression.

\bibitem[Sadinle et~al., 2019]{Sadinle2019}
Sadinle, M., Lei, J., and Wasserman, L. (2019).
\newblock Least ambiguous set-valued classifiers with bounded error levels.
\newblock {\em Journal of the American Statistical Association},
  114(525):223--234.

\bibitem[Sesia and Cand{\`e}s, 2019]{Sesia2019}
Sesia, M. and Cand{\`e}s, E.~J. (2019).
\newblock A comparison of some conformal quantile regression methods.
\newblock {\em arXiv preprint arXiv:1909.05433}.

\bibitem[Vovk, 2012]{Vovk2012}
Vovk, V. (2012).
\newblock Conditional validity of inductive conformal predictors.
\newblock In {\em Asian conference on machine learning}, pages 475--490.

\bibitem[Vovk et~al., 2005]{Vovk2005}
Vovk, V. et~al. (2005).
\newblock {\em Algorithmic learning in a random world}.
\newblock Springer Science \& Business Media.

\bibitem[Vovk et~al., 2009]{Vovk2009}
Vovk, V., Nouretdinov, I., Gammerman, A., et~al. (2009).
\newblock On-line predictive linear regression.
\newblock {\em The Annals of Statistics}, 37(3):1566--1590.

\end{thebibliography}

\section*{Proofs}

\begin{definition}
 $U_{\lfloor\alpha \rfloor}$ and 
 $U_{\lceil\alpha \rceil}$ are the
 $\lfloor n^{-1}(n\alpha)\rfloor$
 $\lceil n^{-1}(n\alpha)\rceil$
 empirical quantiles of
 $U_1,\ldots, U_n$,
\end{definition}

\subsection*{Related to \distsplit}

\begin{proof}[Proof of Theorem \ref{thm:dist_split_control}]
 Let $U_i = \hF(Y_i|\X_i)$.
 Since $(\X_i,Y_i)$ are i.i.d.
 continuous random variables and
 $\hF$ is continuous, obtain that
 $U_i$ are i.i.d. continuous random variables.
 \begin{align*}
  1-\alpha &\leq \P\left(U_{n+1} \in
  [\ufloor; \uceil] \right)
  \leq 1 - \alpha + (n+1)^{-1}.
 \end{align*}
 The conclusion follows from noticing that
 \begin{align*}
  &\P\left(U_{n+1} \in
  [\ufloor; \uceil] \right) \\
  = & \P\left(Y_{n+1} \in
  [\hF^{-1}(\ufloor|\X_{n+1}); \hF^{-1}(\uceil|\X_{n+1})] \right) \\
  = & \P(Y_{n+1}\in C(\X_{n+1}))
 \end{align*}
\end{proof}

\begin{lemma}
 \label{lemma:dist-1}
 Let $I_1 = \left\{i \leq n: 
 |\hF(Y_i|\X_i)-F(Y_i|\X_i)| < \eta_n^{1/3}\right\}$
 and $I_2 = \{1,\ldots,n\}-I_1$.
 Under \cref{ass:consistent_cde},
 $|I_2| = o_P(n)$ and
 $|I_1| = n+o_P(n)$.
\end{lemma}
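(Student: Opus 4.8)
The plan is to condition on the fitted estimator $\hF$ and exploit that, because $\hF$ is built from the independent half-sample $\mathbb{D}'$, the pairs $(\X_i,Y_i)$ with $i\le n$ are, given $\hF$, i.i.d.\ draws from the original distribution. A Markov/union-bound argument then controls $|I_2|$. Concretely, I would introduce the good event
$$E_n = \left\{\E\left[\sup_{y\in\sY}\left(\hF(y|\X)-F(y|\X)\right)^2 \,\big|\, \hF\right] < \eta_n\right\},$$
which by \cref{ass:consistent_cde} satisfies $\P(E_n) \ge 1-\rho_n$.

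Working on $E_n$ and conditioning on $\hF$, the first step applies Markov's inequality term-by-term: for each $i\le n$,
$$\P\!\left(i \in I_2 \mid \hF\right) = \P\!\left(\big|\hF(Y_i|\X_i)-F(Y_i|\X_i)\big| \ge \eta_n^{1/3} \,\big|\, \hF\right) \le \frac{\E\!\left[\sup_{y}\left(\hF(y|\X_i)-F(y|\X_i)\right)^2 \,\big|\, \hF\right]}{\eta_n^{2/3}} < \eta_n^{1/3},$$
the last inequality holding on $E_n$. Summing over $i$ gives $\E\!\left[|I_2| \mid \hF\right] < n\,\eta_n^{1/3}$ on $E_n$. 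The second step applies Markov's inequality again, now to the integer-valued count $|I_2|$: for any fixed $\varepsilon>0$,
$$\P\!\left(|I_2| \ge \varepsilon n\right) \le \P(E_n^c) + \E\!\left[\I_{E_n}\,\P\!\left(|I_2| \ge \varepsilon n \mid \hF\right)\right] \le \rho_n + \frac{\eta_n^{1/3}}{\varepsilon} \longrightarrow 0,$$
since $\eta_n,\rho_n = o(1)$. Hence $|I_2| = o_P(n)$, and because $|I_1| = n - |I_2|$ deterministically, $|I_1| = n + o_P(n)$.

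The only thing requiring care is the bookkeeping of the two nested conditional Markov bounds together with the choice of threshold $\eta_n^{1/3}$: it must be large enough that the per-observation bound $\eta_n/\eta_n^{2/3}=\eta_n^{1/3}$ is small, yet small enough that membership in $I_1$ remains informative — the exponent $1/3$ is exactly what makes both the per-observation bound and the subsequent counting bound $\eta_n^{1/3}/\varepsilon$ vanish. I do not anticipate any genuine analytic obstacle beyond getting this split right.
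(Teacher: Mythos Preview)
Your proposal is correct and follows essentially the same approach as the paper: split on the $\hF$-measurable ``good'' event $E_n$ (the paper's $A_n^c$), apply Markov's inequality with the $\eta_n^{1/3}$ threshold to bound the per-observation probability of landing in $I_2$ by $\eta_n^{1/3}$, and then pass to $|I_2|$. The only cosmetic difference is that the paper first bounds the unconditional probability $\P(B_n)$ and then invokes the (conditional) binomial structure of $|I_2|$, whereas you apply a second Markov bound directly to $\E[|I_2|\mid\hF]$; the two routes are interchangeable.
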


\begin{proof}
 Let $A_n = \left\{\E\left[\sup_{y \in \sY}
 \left(\hF(y|\X)-F(y|\X)\right)^2 \big |\hF\right] 
 \geq \eta_n\right\}$ and
 $B_n = \left\{|\hF(Y|\X)-F(Y|\X)| 
 \geq \eta_n^{1/3}\right\}$.
 \begin{align*}
  \P(B_n) 
  &= \E[\P(B_n|\hF) \I(A_n)] 
  + \E[\P(B_n|\hF) \I(A_n^c)] \\
  &\leq \P(A_n) + \E\left[\frac{\E[(\hF(Y|\X)-F(Y|\X))^2|\hF]}
  {\eta_n^{2/3}}\I(A_n^c)\right] \\
  &\leq \rho_n + \eta_n^{1/3} = o(1)
 \end{align*}
 Note that $|I_2| \sim \text{Binomial}(n, \P(B_n))$.
 Since $\P(B_n) = o(1)$, conclude that
 $|I_2| = o_P(n)$. That is,
 $|I_1| = n+o_P(n)$.
\end{proof}

\begin{lemma}
 \label{lemma:dist-2}
 Under \cref{ass:consistent_cde},
 If $U_i = \hF(Y_i|\X_i)$, then
 for every $\alpha \in (0,1)$,
 $U_{\lfloor \alpha \rfloor} = \alpha + o_{P}(1)
 = U_{\lceil \alpha \rceil}$.
\end{lemma}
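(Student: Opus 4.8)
The plan is to reduce the statement to the standard fact that the empirical quantiles of i.i.d.\ uniforms converge to the corresponding population quantile, and to transport this from the unobservable transformed variables $V_i := F(Y_i|\X_i)$ to the $U_i = \hF(Y_i|\X_i)$ using \cref{lemma:dist-1}.

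First I would invoke the probability integral transform: since $F(\cdot|\x)$ is continuous for every $\x$ (as guaranteed by \cref{ass:continuity}), the $V_i = F(Y_i|\X_i)$ are i.i.d.\ $\mathrm{Uniform}(0,1)$. Write $\hat G_n$ and $\hat H_n$ for the empirical c.d.f.'s of $U_1,\dots,U_n$ and $V_1,\dots,V_n$ respectively. By Glivenko--Cantelli, $\sup_{t\in[0,1]}|\hat H_n(t)-t|\to 0$ almost surely, so for any fixed $\varepsilon>0$ with $0<\alpha-\varepsilon$ and $\alpha+\varepsilon<1$ one has $\hat H_n(\alpha-\varepsilon)\to\alpha-\varepsilon$ and $\hat H_n(\alpha+\varepsilon)\to\alpha+\varepsilon$ in probability.

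Next I would relate $\hat G_n$ to $\hat H_n$. With $I_1,I_2$ as in \cref{lemma:dist-1}, for every $t$ and every $i\in I_1$ we have $\I(V_i\le t-\eta_n^{1/3})\le\I(U_i\le t)\le\I(V_i\le t+\eta_n^{1/3})$, while the $|I_2|$ remaining indices shift $\hat G_n(t)$ by at most $|I_2|/n$. Summing over $i$ gives, uniformly in $t$,
\[
 \hat H_n(t-\eta_n^{1/3})-\frac{|I_2|}{n}\ \le\ \hat G_n(t)\ \le\ \hat H_n(t+\eta_n^{1/3})+\frac{|I_2|}{n}.
\]
Since $\eta_n^{1/3}=o(1)$ and $|I_2|=o_P(n)$ by \cref{lemma:dist-1}, this together with the previous step shows that for any fixed $\varepsilon>0$, with probability tending to $1$ we have $\hat G_n(\alpha-\varepsilon)\le\alpha-\varepsilon/2$ and $\hat G_n(\alpha+\varepsilon)\ge\alpha+\varepsilon/2$.

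Finally I would read off the quantiles. Both $U_{\lfloor\alpha\rfloor}$ and $U_{\lceil\alpha\rceil}$ are order statistics $U_{(k)}$ with $k/n=\alpha+O(n^{-1})$, and $U_{(k)}\le t$ iff $\hat G_n(t)\ge k/n$. For $n$ large enough that $\alpha-\varepsilon/2<k/n<\alpha+\varepsilon/2$, the two inequalities above force $\alpha-\varepsilon< U_{(k)}\le\alpha+\varepsilon$ with probability tending to $1$; as $\varepsilon>0$ is arbitrary, $U_{(k)}=\alpha+o_P(1)$, which is the claim for both choices of $k$. The only mildly delicate point is keeping the two error terms $\eta_n^{1/3}$ and $|I_2|/n$ uniform in $t$ in the sandwich step; beyond that it is the textbook uniform-quantile argument, so I do not anticipate a genuine obstacle.
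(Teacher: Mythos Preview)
Your argument is correct and follows essentially the same route as the paper: both proofs invoke \cref{lemma:dist-1} to split indices into $I_1$ (where $|U_i-V_i|<\eta_n^{1/3}$) and a negligible set $I_2$, sandwich the empirical distribution of the $U_i$ between shifted empirical distributions of the $V_i=F(Y_i|\X_i)$, and then use that the $V_i$ are i.i.d.\ uniforms. The only cosmetic difference is that the paper carries out the sandwich directly on the empirical quantile functions while you work with the empirical c.d.f.'s and then invert; these are dual formulations of the same inequality and neither buys anything the other does not.
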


\begin{proof}
 Let $I_1$ and $I_2$ be 
 such as in \cref{lemma:dist-1}.
 Also, let $\hat{G}_1$, $G_1$ and $G_0$ be,
 the empirical quantiles of, respectively,
 $\{U_i: i \in I_1\}$,
 $\{F(Y_i|\X_i): i \in I_1\}$, and
 $\{F(Y_i|\X_i): i \leq n\}$.
 By definition of $I_1$,
 for every $\alpha^* \in [0,1]$,
 $\hat{G}_1^{-1}(\alpha^*) = G_1^{-1}(\alpha^*) + o(1)$.
 Also, $G_0^{-1}(\alpha^*) = \alpha^* + o_P(1)$.
 Therefore, since
 \begin{align*}
  G_0^{-1}\left(\frac{|I_1|\alpha^*}{n}\right)
  \leq G_1^{-1}(\alpha^*)
  \leq G_0^{-1}\left(\frac{|I_1|\alpha^*+|I_2|}{n}\right),
 \end{align*}
 conclude that $\hat{G}_1^{-1}(\alpha^*) 
 = \alpha^* + o_P(1)$.
 Finally, since
 \begin{align*}
  \hat{G}^{-1}\left(\frac{n\alpha-|I_2|}{|I_1|}\right)
  \leq U_{\lfloor \alpha \rfloor} \leq 
  U_{\lceil \alpha \rceil} \leq
  \hat{G}^{-1}\left(\frac{n\alpha}{|I_1|}\right),
 \end{align*}
 Conclude that $U_{\lfloor \alpha \rfloor} = \alpha + o_P(1)
 = U_{\lceil \alpha \rceil}$.
\end{proof}

\begin{lemma}
 \label{lemma:dist-cond}
 Let $U_i = \hat{F}(Y_i|\X_i)$.
 Under \cref{ass:consistent_cde,ass:continuity},
 \begin{align*}
 \hF^{-1}(U_{[0.5\alpha]}|\X_{n+1})
 &= F^{-1}(0.5\alpha|\X_{n+1}) + o_{P}(1) \\
 \hF^{-1}(U_{[1-0.5\alpha]}|\X_{n+1})
 &= F^{-1}(1-0.5\alpha|\X_{n+1}) + o_{P}(1)
 \end{align*}
\end{lemma}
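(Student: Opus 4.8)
The plan is to combine the quantile-level convergence from \cref{lemma:dist-2} with the uniform regularity of $F^{-1}(\cdot\mid\x)$ from \cref{ass:continuity} and the $L^2$-consistency of $\hF$ from \cref{ass:consistent_cde}, interpolating between $\hF^{-1}$ and $F^{-1}$ through an intermediate quantity. Fix $\alpha^* \in \{0.5\alpha, 1-0.5\alpha\}$ and write $U_{[\alpha^*]}$ for either $U_{\lfloor\alpha^*\rfloor}$ or $U_{\lceil\alpha^*\rceil}$; by \cref{lemma:dist-2} we have $U_{[\alpha^*]} = \alpha^* + o_P(1)$. The target is to show $\hF^{-1}(U_{[\alpha^*]}\mid\X_{n+1}) = F^{-1}(\alpha^*\mid\X_{n+1}) + o_P(1)$. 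I would split the difference as
\begin{align*}
 \hF^{-1}(U_{[\alpha^*]}\mid\X_{n+1}) - F^{-1}(\alpha^*\mid\X_{n+1})
 &= \Big(\hF^{-1}(U_{[\alpha^*]}\mid\X_{n+1}) - F^{-1}(U_{[\alpha^*]}\mid\X_{n+1})\Big) \\
 &\quad + \Big(F^{-1}(U_{[\alpha^*]}\mid\X_{n+1}) - F^{-1}(\alpha^*\mid\X_{n+1})\Big),
\end{align*}
and control the two bracketed terms separately.

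For the second term, note $U_{[\alpha^*]}$ lies in a shrinking neighborhood of $\alpha^*$ with probability tending to $1$. On the event that $U_{[\alpha^*]}$ is in the neighborhood of $q_{\alpha^*}$ where \cref{ass:continuity} gives $\frac{dF(y\mid\x)}{dy}\ge M^{-1}$ uniformly in $\x$, the inverse $F^{-1}(\cdot\mid\x)$ is Lipschitz with constant $M$ near $\alpha^*$, uniformly in $\x$; hence $|F^{-1}(U_{[\alpha^*]}\mid\X_{n+1}) - F^{-1}(\alpha^*\mid\X_{n+1})| \le M\,|U_{[\alpha^*]}-\alpha^*| = o_P(1)$. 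For the first term, set $y^* := F^{-1}(U_{[\alpha^*]}\mid\X_{n+1})$, so $F(y^*\mid\X_{n+1}) = U_{[\alpha^*]}$. Then $|\hF(y^*\mid\X_{n+1}) - U_{[\alpha^*]}| = |\hF(y^*\mid\X_{n+1}) - F(y^*\mid\X_{n+1})| \le \sup_{y}|\hF(y\mid\X_{n+1})-F(y\mid\X_{n+1})|$, which is $o_P(1)$ by \cref{ass:consistent_cde} together with a conditional Markov argument as in the proof of \cref{lemma:dist-1} (the sup-$L^2$ bound over $y$, averaged over the new $\X_{n+1}$, controls the sup at the single point). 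Applying the uniform Lipschitz bound for $F^{-1}$ once more — now to pull $\hF(y^*\mid\X_{n+1})$ and $U_{[\alpha^*]}$ back through $F^{-1}$, noting both are within $o_P(1)$ of $\alpha^*$ so stay in the good neighborhood — and using that $\hF^{-1}$ and $F^{-1}$ are compared at arguments that differ by $o_P(1)$, I would conclude $|\hF^{-1}(U_{[\alpha^*]}\mid\X_{n+1}) - y^*| = o_P(1)$. Combining the two estimates gives the claim.

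The main obstacle is the first term: going from the $L^2$-in-$y$, in-expectation-over-$\X$ consistency of \cref{ass:consistent_cde} to a pointwise-in-$y$, in-probability statement at the data-dependent argument $y^*$ and the random point $\X_{n+1}$, while simultaneously inverting $\hF$ rather than $F$. This requires care about which randomness is being conditioned on (the estimator $\hF$ is built from $\mathbb{D}'$, and $\X_{n+1}$ is fresh), and one must ensure the event where \cref{ass:continuity}'s derivative lower bound applies has probability $\to 1$ uniformly enough that the Lipschitz inversion is legitimate; the rest is routine triangle-inequality bookkeeping.
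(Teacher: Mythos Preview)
Your proposal is correct and follows essentially the same route as the paper: the same two-term decomposition via the intermediate $F^{-1}(U_{[\alpha^*]}\mid\X_{n+1})$, with the second term handled by \cref{lemma:dist-2} plus the Lipschitz bound from \cref{ass:continuity}, and the first by sup-closeness of $\hF$ to $F$ (from \cref{ass:consistent_cde}) combined again with the derivative lower bound. The paper phrases the first-term step slightly more directly --- if $\sup_y|\hF(y\mid\x)-F(y\mid\x)|<\eta$ then $|\hF^{-1}(\alpha^*)-F^{-1}(\alpha^*)|\le M\eta$ --- which is precisely the inversion fact your argument is reaching for, so your introduction of $y^*$ and ``pulling back through $F^{-1}$'' can be replaced by that one-line observation.
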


\begin{proof}
 In order to prove the first equality,
 it is enough to show that
 $F^{-1}(U_{[0.5\alpha]}|\X_{n+1})
 = F^{-1}(0.5\alpha|\X_{n+1}) + o_P(1)$
 and that $\hF^{-1}(U_{[0.5\alpha]}|\X_{n+1}) 
 = F^{-1}(U_{[0.5\alpha]}|\X_{n+1}) + o_{P}(1)$.
 The first part follows from
 \cref{lemma:dist-2} and
 the continuity of $F(y|\x)$
 (\cref{ass:continuity}).
 For the second part, note that,
 if $\sup_y |\hat{F}(y|\x)-F(y|\x)| < \eta_n$,
 then, for every $\alpha^*$,
 $|\hat{F}^{-1}(\alpha^*)-F^{-1}(\alpha^*)| 
 \leq \eta_n 
 \left(\inf_y \frac{dF(y|\x)}{dy}\right)^{-1}$.
 Using this observation,
 the proof of the second part follows from
 \cref{ass:continuity}, and observing that
 $U_{[0.5\alpha]} = 0.5\alpha + o_{P}(1)$
 (\cref{lemma:dist-2}) and
 $\P(\sup_y |\hat{F}(y|\x)-F(y|\x)| 
 \geq \eta_n) = o(1)$
 (\cref{ass:consistent_cde}).
 
 The proof for the $1-.5\alpha$ quantile
 is analogous to the one for
 the $.5\alpha$ quantile.
\end{proof}

\begin{proof}[Proof of \cref{thm:dist_optimal}]
 Follows directly from
 \cref{lemma:dist-cond}.
\end{proof}

\subsection*{Related to \cdsplit}

\begin{proof}[Proof \cref{thm:cd_split_control}]
 Let$\{i_1,\ldots,i_{n_j}\} = \{i: \X_i \in A(\x_{n+1})\}$,
 $U_l = \hf(Y_{i_l}|\X_{i_l})$, 
 for $l=1,\ldots,n_j$, and 
 $U_{n_j+1} = \hf(Y_{n+1}|\X_{n+1})$. 
 Since $(\X_1,Y_1),\ldots,(\X_{n_j},Y_{n_j}),
 (\X_{n+1},Y_{n+1})$ are i.i.d. 
 random variables, obtain that 
 $U_i$ are i.i.d. random variables 
 conditional on the event 
 $\X_{n+1} \in A(\x_{n+1})$ and on
 $i_1,\ldots,i_{n_j}$.
 Therefore,
 \begin{align*}
  1-\alpha 
  &\leq \P\left(U_{m+1} \geq U_{[\alpha]}|\X_{n+1} 
  \in A(\x_{n+1}),i_1,\ldots,i_{n_j}\right)
 \end{align*}
 The conclusion follows from the fact that
 $Y_{n+1} \in C(\X_{n+1}) \iff 
 U_{m+1} \geq U_{[1-\alpha]}$ and 
 because this holds for every sequence 
 $i_1,\ldots,i_{n_j}$.
\end{proof}

\begin{proof}[Proof of Theorem \ref{thm:convergeHPD}]
	Let $U_i := f(Y_{i}|\x_i)$, $i=1,\ldots,m$,
	$U_{n+1}:=f(Y_{n+1}|\x_{n+1})$, and
	$W:=(\x_1,\ldots,\x_m,\x_{n+1})$.
	If
	$g_{\x_i}=g_{\x_{n+1}}$ for every $i=1,\ldots,m$, then $U_1,\ldots,U_m, U_{n+1}$ are i.i.d. conditional on $W$. Indeed, for every $t \in \mathbb{R}$,
	\begin{align*}
	\P(U_i\geq t|W)&=\P(f(Y_{i}|\x_i)\geq t|\x_i) \\
	&=\P(f(Y_{n+1}|\x_{n+1})\geq t|\x_{n+1})\\
	&=\P(U_{n+1}\geq t|\x_{n+1}),
	\end{align*} 
	where the next-to-last equality follows from the definition of the profile of the density.
	
	For every $K \in \mathbb{R}$, let
	$Q(K):=\left|\{i:f(Y_i|\x_i)\geq K\} \right|$.
	Because $U_i$'s are conditionally independent and identically distributed, then
	$Q(K)|W \sim \mbox{Binomial}(m,\P(f(Y_1|\x_1)\geq K))$.
	It follows that
	$Q(K)/m \xrightarrow[a.s.]{m\longrightarrow \infty} \P(f(Y_1|\x_1)\geq K)$.
	In particular, 
	$Q(t^*)/m \xrightarrow[a.s.]{m\longrightarrow \infty} 1-\alpha$.
	Now, by definition
	$Q(T_m)/m \xrightarrow[a.s.]{m\longrightarrow \infty} 1-\alpha$. Conclude that
	$T_m \xrightarrow[a.s.]{m\longrightarrow \infty} t^*$.
	
\end{proof}

\begin{proof}[Proof of Theorem \ref{thm:equivalance}]
	Item (i) was already shown as part of the proof of Theorem \ref{thm:convergeHPD}.
	To show (ii), 
	assume that 
	$t^*(\x_a,\alpha)=t^*(\x_b,\alpha)$ for every $\alpha \in (0,1)$.
	Now,
	notice that 
	$t^*(\x_a,\alpha)$ is such that $g_{\x_a}(t^*(\x_a,\alpha))=1-\alpha.$ Conclude that 
	$g_{\x_a}(t^*(\x_a,\alpha))=g_{\x_b}(t^*(\x_b,\alpha))$  for every $\alpha \in (0,1)$. 
	Now, because $\hf$ is continuous, $\{t^*(\x_a,\alpha):\alpha \in (0,1)\}=\mbox{Im}(\hf(\cdot|\x_a))$. Thus,
	$g_{\x_a}=g_{\x_b}$, and therefore $\x_a \sim \x_b$.
\end{proof}

\end{document}